\documentclass[a4paper,onecolumn,superscriptaddress,11pt,accepted=2020-04-27]{quantumarticle}
\pdfoutput=1
\usepackage[utf8]{inputenc}
\usepackage[english]{babel}
\usepackage[T1]{fontenc}
\usepackage{amsmath}
\usepackage{hyperref}
\usepackage[numbers,sort&compress]{natbib}




\usepackage{tikzit}
\input{zx.tikzdefs}

\tikzstyle{box}=[shape=rectangle, text height=1.5ex, text depth=0.25ex, yshift=0.5mm, fill=white, draw=black, minimum height=5mm, yshift=-0.5mm, minimum width=5mm, font={\small}]
\tikzstyle{Z dot}=[inner sep=0mm, minimum size=2mm, shape=circle, draw=black, fill={rgb,255: red,221; green,255; blue,221}]
\tikzstyle{Z phase dot}=[minimum size=5mm, font={\footnotesize\boldmath}, shape=rectangle, rounded corners=2mm, inner sep=0.2mm, outer sep=-2mm, scale=0.8, tikzit shape=circle, draw=black, fill={rgb,255: red,221; green,255; blue,221}, tikzit draw=blue]
\tikzstyle{X dot}=[Z dot, shape=circle, draw=black, fill={rgb,255: red,255; green,136; blue,136}]
\tikzstyle{X phase dot}=[Z phase dot, tikzit shape=circle, tikzit draw=blue, fill={rgb,255: red,255; green,136; blue,136}, font={\footnotesize\boldmath}]
\tikzstyle{hadamard}=[fill=yellow, draw=black, shape=rectangle, inner sep=0.6mm, minimum height=1.5mm, minimum width=1.5mm]
\tikzstyle{vertex}=[inner sep=0mm, minimum size=1mm, shape=circle, draw=black, fill=black]
\tikzstyle{vertex set}=[inner sep=0mm, minimum size=1mm, shape=circle, draw=black, fill=white, font={\footnotesize\boldmath}]

\tikzstyle{hadamard edge}=[-, dashed, dash pattern=on 2pt off 0.5pt, thick, draw={rgb,255: red,68; green,136; blue,255}]
\tikzstyle{brace edge}=[-, tikzit draw=blue, decorate, decoration={brace,amplitude=1mm,raise=-1mm}]
\tikzstyle{diredge}=[->]

\usepackage{xspace}
\usepackage{algorithm}
\usepackage{algorithmicx}
\usepackage[noend]{algpseudocode}

\newcommand{\symd}{\mathbin{\Delta}\xspace}
\newcommand{\Symdi}[1]{\underset{\scriptstyle #1}{\scalebox{1.5}{$\symd$}}\,}
\newcommand{\symdi}[1]{{\scalebox{1.5}{$\symd$}}_{#1}\,}

\makeatletter
\newcommand\etc{etc\@ifnextchar.{}{.\@}\xspace}

\makeatother

\newcommand{\odd}[2]{\textsf{Odd}_{#1}\left(#2\right)}
\newcommand{\codd}[2]{\textsf{Odd}_{#1}\left[#2\right]}

\newcommand{\CZ}{\ensuremath{\textrm{CZ}}\xspace}
\newcommand{\CX}{\ensuremath{\textrm{CNOT}}\xspace}

\newcommand{\CNOT}{\CX}

\newcommand{\UG}[1]{\ensuremath{G(#1)}\xspace}

\usepackage{bm}

\newcommand{\bra}[1]{\ensuremath{\left\langle #1 \right|}}
\newcommand{\ket}[1]{\ensuremath{\left|  #1 \right\rangle}}
\newcommand{\braket}[2]{\ensuremath{\langle#1|#2\rangle}}
\newcommand{\ketbra}[2]{\ensuremath{\ket{#1}\!\bra{#2}}}
\usepackage{amsmath,amsthm,amssymb}
\theoremstyle{definition}
\newtheorem{theorem}{Theorem}[section]
\newtheorem{corollary}[theorem]{Corollary}
\newtheorem{lemma}[theorem]{Lemma}
\newtheorem*{lemma*}{Lemma}
\newtheorem*{proposition*}{Proposition}
\newtheorem{proposition}[theorem]{Proposition}

\newtheorem{definition}[theorem]{Definition}
\newtheorem{example}[theorem]{Example}
\newtheorem{remark}[theorem]{Remark}

\usepackage[color,leftbars]{changebar}

\title{Graph-theoretic Simplification of Quantum Circuits with the ZX-calculus}


\author{Ross Duncan}
\affiliation{University of Strathclyde, 26 Richmond Street, Glasgow G1
1XH, UK}  
\affiliation{Cambridge Quantum Computing Ltd, 9a Bridge Street,
  Cambridge CB2 1UB, UK}
\email{ross.duncan@strath.ac.uk}
\orcid{0000-0001-6758-1573}

\author{Aleks Kissinger}
\affiliation{Department of Computer Science, University of Oxford}
\email{aleks.kissinger@cs.ox.ac.uk}
\orcid{0000-0002-6090-9684}

\author{Simon Perdrix}
\affiliation{CNRS LORIA, Inria-MOCQUA, Universit\'e de Lorraine, F 54000 Nancy,
France}
\email{simon.perdrix@loria.fr}
\orcid{0000-0002-1808-2409}

\author{John van de Wetering}
\affiliation{Institute for Computing and Information Sciences, Radboud University Nijmegen}
\email{john@vdwetering.name}
\orcid{0000-0002-5405-8959}






\begin{document}






\maketitle

\begin{abstract}
We present a completely new approach to quantum circuit optimisation, based on the ZX-calculus. We first interpret quantum circuits as ZX-diagrams, which provide a flexible, lower-level language for describing quantum computations graphically. Then, using the rules of the ZX-calculus, we give a simplification strategy for ZX-diagrams based on the two graph transformations of local complementation and pivoting and show that the resulting reduced diagram can be transformed back into a quantum circuit. While little is known about extracting circuits from arbitrary ZX-diagrams, we show that the underlying graph of our simplified ZX-diagram always has a graph-theoretic property called generalised flow, which in turn yields a deterministic circuit extraction procedure. For Clifford circuits, this extraction procedure yields a new normal form that is both asymptotically optimal in size and gives a new, smaller upper bound on gate depth for nearest-neighbour architectures. For Clifford+T and more general circuits, our technique enables us to to `see around' gates that obstruct the Clifford structure and produce smaller circuits than na\"ive `cut-and-resynthesise' methods.
\end{abstract}


\section{Introduction}

Quantum circuits provide a \textit{de facto} `assembly language' for quantum computation, in which computations are described as the composition of many simple unitary linear maps called \textit{quantum gates}. An important topic in the study of quantum circuits is \textit{quantum circuit optimisation}, whereby quantum circuits which realise a given computation are transformed into new circuits involving fewer or simpler gates.
While some strides have already been made in this area, the field is still relatively undeveloped. Most approaches to quantum circuit optimisation are based on only a handful of techniques: gate substitutions, computation of small (pseudo-)normal forms for special families of circuits~\cite{aaronsongottesman2004,markov2008optimal,CliffOpt}, optimisation of phase polynomials~\cite{amy2014polynomial,heyfron2018efficient}, or some combination thereof~\cite{abdessaied2014quantum,nam2018automated}.

This paper lays a theoretical foundation for a completely new quantum circuit optimisation technique based on the \zxcalculus~\cite{CD2}. A key point of departure of our technique is that we break the rigid structure of quantum circuits and perform reductions on a lower-level, string diagram-based representation of a quantum circuit called a \textit{\zxdiagram}. These diagrams are more flexible than circuits, in that they can be deformed arbitrarily, and are subject to a rich equational theory: the \zxcalculus.
The core rules of the \zxcalculus give a sound and complete~\cite{Backens1} theory for \textit{Clifford circuits}, a well-known class of circuits that can be efficiently classically simulated. More surprisingly, it was shown in 2018 that modest extensions to the \zxcalculus suffice to give completeness for families of circuits that are approximately universal~\cite{SimonCompleteness,ZXNormalForm} and exactly universal~\cite{HarnyAmarCompleteness,JPV-universal,euler-zx,carette2019completeness} for quantum computation.



Since \zxdiagrams provide a more flexible representation of a quantum computation than a circuit, we can derive simplifications of \zxdiagrams that have no quantum circuit analogue. 
However, this added flexibility comes at a price: while any quantum circuit can be interpreted as a \zxdiagram by decomposing gates into smaller pieces, the converse is not true. For a generic \zxdiagram corresponding to a unitary, there is no known general-purpose procedure for efficiently recovering a quantum circuit. Hence, an important part of our optimisation procedure is keeping enough information about the quantum circuit structure to get a circuit back at the end. Schematically:
\[ \scalebox{0.9}{\tikzfig{schematic}} \]

Until recently, this extraction step was poorly understood, and the only techniques for doing circuit-to-circuit translation with the \zxcalculus did so without departing from the overall structure of the original quantum circuit~\cite{FaganDuncan}, avoiding the extraction problem altogether. In this paper we adopt a more ambitious approach. First, building on prior work of two of the authors~\cite{DP1,DP2,DP3}, we use the rules of the \zxcalculus to derive a sound, terminating simplification procedure for \zxdiagrams. The key simplification steps involve the graph-theoretic transformations of \emph{local complementation} and \emph{pivoting}~\cite{kotzig,Bouchet87}, 
which allow certain generators to be deleted from \zxdiagrams one-by-one or in pairs, respectively. When applied to Clifford circuits, the diagram resulting from the simplification procedure is represented in the form of a \emph{graph-state with local Cliffords} (GS-LC)~\cite{hein2006entanglement}, a pseudo-normal form for representing Clifford circuits whose size is at most quadratic in the number of qubits. Hence, one side effect of our simplification procedure is that it gives a simple, graph-theoretic alternative to the normalisation of a ZX-diagram to GS-LC form proposed by Backens~\cite{Backens1}. For non-Clifford circuits, the simplified \zxdiagram represents a `skeleton' of the circuit we started with, consisting only of generators arising from non-Clifford phase gates and their nearest neighbours. Although this is no longer a canonical form, it can still be significantly smaller than the input circuit, especially when there is a relatively low proportion of non-Clifford gates. We then show that, even though this simplification breaks the circuit structure, it preserves a graph-theoretic invariant called \textit{focused gFlow}~\cite{GFlow,mhalla2011graph}, from which we can derive an efficient circuit extraction strategy. We demonstrate the full simplify-and-extract procedure by means of a running example, which is also available online as a Jupyter notebook.\footnote{\href{https://nbviewer.jupyter.org/github/Quantomatic/pyzx/blob/906f6db3/demos/example-gtsimp.ipynb}{\color{blue!80!black}\texttt{nbviewer.jupyter.org/github/Quantomatic/pyzx/blob/906f6db3/demos/example-gtsimp.ipynb}} Note this link is read-only. To run the notebook interactively, download the file and open it in Jupyter (with PyZX installed). The simplest way to do this is to clone or download the PyZX repostory from: \href{https://github.com/quantomatic/pyzx}{\color{blue!80!black}\texttt{github.com/quantomatic/pyzx}} and find the notebook in \texttt{demos}.}

In the case of Clifford circuits, this procedure will produce circuits comparable in size to those described by standard techniques~\cite{aaronsongottesman2004,nest2010clifford}. In the non-Clifford case, this can already find more simplifications than na\"ively `cutting' the circuit and simplifying purely-Clifford sections. More importantly, this paper establishes a new theoretical framework upon which to build powerful new optimisation techniques, e.g. for important tasks such as T-count reduction. Indeed an ancilla-free T-count reduction technique based on the framework of this paper has recently been introduced in Ref.~\cite{zxtcount} and implemented in a tool called PyZX~\cite{pyzx}. At the time of publication, this new technique matched or out-performed the state of the art on 72\% of the benchmark circuits tested, in some cases decreasing the T-count by as much as 50\%.\footnote{The `PyZX-only' T-counts from~\cite{zxtcount} were subsequently matched by parallel, independent work of Zhang and Chen~\cite{zhang2019tgates}.}



The paper is organised as follows. After giving a brief introduction to the \zxcalculus in Section~\ref{sec:zx}, we will introduce \emph{graph-like} \zxdiagrams in Section~\ref{sec:graph-like-zx}. 
This mild restriction to the family of all \zxdiagrams will enable us to speak more readily about graph-theoretic properties of the diagram, like the existence of a focused gFlow. 
In Section~\ref{sec:lcomppivot} we will show that both local complementation and pivoting preserve focused gFlow. 
Then in Section~\ref{sec:simp} we will derive a simplification routine using these graph-theoretic notions.
In Section~\ref{sec:clifford} we study the properties of Clifford diagrams resulting from this simplification routine, and show how these can be transformed into circuits.
Then in Section~\ref{sec:cliffordT} we show how general diagrams produced by this routine can be extracted into a circuit,
and give some experimental evidence that this method performs better
than only optimizing Clifford sub-circuits. Finally, in
Section~\ref{sec:conclusion} we conclude and discuss future work.

\bigskip

\noindent \textit{\textbf{Note:} The results in Proposition~\ref{prop:cliff-gate-depth} concerning Clifford gate count and depth were added to a version of this paper submitted for peer review in October 2019. Prior to us making this version public, Bravyi and Maslov independently reported similar results in Ref.~\cite{bravyi2020hadamard}.}

\section{The \zxcalculus and quantum circuits}\label{sec:zx}

We will provide a brief overview of the \zxcalculus. For an in-depth
reference see Ref.~\cite{CKbook}.

The \zxcalculus is a diagrammatic language similar to the familiar
quantum circuit notation.  A \emph{\zxdiagram} consists of \emph{wires} and \emph{spiders}.  Wires entering the diagram from the left are \emph{inputs}; wires exiting to
the right are \emph{outputs}.  Given two diagrams we can compose them
by joining the outputs of the first to the inputs of the second, or
form their tensor product by simply stacking the two diagrams.

Spiders are linear maps which can have any number of input or output
wires.  There are two varieties: $Z$ spiders depicted as green dots and $X$ spiders depicted as red dots.\footnote{If you are reading this
  document in monochrome or otherwise have difficulty distinguishing green and red, $Z$ spiders will appear lightly-shaded and $X$ spiders darkly-shaded.} Written explicitly in Dirac notation, these linear maps are:
\[\textrm{
\small
$\tikzfig{Zsp-a} \ := \ \ketbra{\textrm{$0$...$0$}}{\textrm{$0$...$0$}} +
e^{i \alpha} \ketbra{\textrm{$1$...$1$}}{\textrm{$1$...$1$}} \hfill
\qquad
\hfill \tikzfig{Xsp-a} \ := \ \ketbra{\textrm{$+$...$+$}}{\textrm{$+$...$+$}} +
e^{i \alpha} \ketbra{\textrm{$-$...$-$}}{\textrm{$-$...$-$}}$
}
\]
Therefore a \zxdiagram with $m$ input wires and $n$ output wires represents a linear map $(\mathbb C^2)^{\otimes m} \to (\mathbb C^2)^{\otimes n}$ built from
spiders (and permutations of qubits) by composition and tensor product
of linear maps.  As a special case, diagrams with no inputs and $n$ outputs represent vectors in $(\mathbb C^2)^{\otimes n}$, i.e.
(unnormalised) $n$-qubit states.

\begin{example}\label{ex:basic-maps-and-states}
  We can immediately write down some simple state preparations and
  unitaries in the \zxcalculus:
  \[
  \begin{array}{rclcrcl}
  \tikzfig{ket-+} & = & \ket{0} + \ket{1} \ \propto \ket{+} &
  \qquad\qquad &
  \tikzfig{ket-0} & = & \ket{+} + \ket{-} \ \propto \ket{0} \\
  &\quad& & & \quad \\
  \tikzfig{Z-a} & = & \ketbra{0}{0} + e^{i \alpha} \ketbra{1}{1} =
  Z_\alpha &
  & 
  \tikzfig{X-a} & = & \ketbra{+}{+} + e^{i \alpha} \ketbra{-}{-} = X_\alpha
  \end{array}
  \]
  In particular we have the Pauli matrices:
  \[
  \hfill
  \tikzfig{Z} = Z \qquad\qquad   \tikzfig{X} = X \qquad\qquad
  \hfill
  \]
\end{example}
It will be convenient to introduce a symbol -- a yellow square -- for
the Hadamard gate. This is defined by the equation:
\begin{equation}\label{eq:Hdef}
\hfill
\tikzfig{had-alt}
\hfill
\end{equation}
We will often use an alternative notation to simplify the diagrams,
and replace a Hadamard between two spiders by a blue dashed edge, as
illustrated below.
\ctikzfig{blue-edge-def} 
Both the blue edge notation and the Hadamard box can always be
translated back into spiders when necessary. We will refer to the blue
edge as a \emph{Hadamard edge}.

Two diagrams are considered \emph{equal} when one can be deformed to
the other by moving the vertices around in the plane, bending,
unbending, crossing, and uncrossing wires, so long as the connectivity
and the order of the inputs and outputs is maintained.  
Furthermore, there is an additional set of equations that we call the \emph{rules} of the \zxcalculus; these are shown in
Figure~\ref{fig:zx-rules}.

\begin{remark}
We neglect (non-zero) scalar factors in the rules in Figure~\ref{fig:zx-rules}. That is, if we are able to prove two ZX-diagrams are equal using the rules of Figure~\ref{fig:zx-rules}, then their associated matrices satisfy $A = zB$ for $z \in \mathbb C\backslash\{0\}$. It is possible to give a presentation of the ZX-calculus that accounts for scalar factors (see e.g.~\cite{Backens:2015aa}), but for our purposes it will not be necessary. This is because the inputs and outputs of our simplification procedure are quantum circuits, which are unitary by construction. If $A$ and $B$ are unitary, $A = zB \implies |z| = 1$, so neglecting scalars will (at worst) produce and output that differs from the input by a global phase.
\end{remark}

\begin{figure}
\centering
\tikzfig{ZX-rules}
\caption{
A convenient presentation for the ZX-calculus. These rules hold for all $\alpha, \beta \in [0, 2 \pi)$, and due to $(\bm{h})$ and $(\bm{i2})$ all rules also hold with the colours interchanged. Note `...' should be read as `0 or more', hence the spiders on the LHS of \SpiderRule are connected by one or more wires.}
\label{fig:zx-rules}
\end{figure}


Let us derive two additional rules, known as the
\emph{antipode} rule and the \emph{$\pi$-copy} rule.

\begin{lemma}\label{lem:hopf-law}
  The antipode rule, 
  \ctikzfig{hopf-rule}
  is derivable in the \zxcalculus.
  \begin{proof}
    To derive the \HopfRule rule we take advantage of the freedom to
    deform the diagram as shown below.
    \ctikzfig{hopf-proof}
    The final equation is simply dropping the scalar.
  \end{proof}
\end{lemma}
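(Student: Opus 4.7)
The plan is to derive the antipode rule by exploiting the bialgebra interaction between $Z$ and $X$ spiders, together with spider fusion. I would begin from the LHS — two opposite-colored spiders connected by a pair of parallel wires — and first use spider fusion \SpiderRule to split, say, the green $Z$ spider into two $Z$ spiders joined by a single wire, each carrying one of the two connecting edges to the red $X$ spider. Applying \SpiderRule again in reverse on the red side sets up a configuration that exactly matches the left-hand side of the bialgebra rule \BialgRule, namely a $K_{2,2}$-like bipartite pattern of alternating-colour spiders.

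Next I would invoke \BialgRule to replace that bipartite fragment by its dual form: a single edge with one $Z$ and one $X$ spider on each end, with the "outer" legs going in the opposite direction relative to the original. The crucial observation is that in our setup the inner legs produced by the bialgebra step close up into a short internal segment — essentially a green-red-green-red chain on a single wire — which I can simplify by further \SpiderRule fusions and then by \CopyRule (or equivalently an application of \HHRule and \IdentityRule) to reduce to a wire plus a dropped scalar.

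The remaining outer legs are now connected only through the single scalar/constant piece we just discarded, which is precisely the disconnected form on the RHS of the antipode rule. The reference to "the freedom to deform the diagram" in the author's proof tells me that, in addition to the rule applications, I will need to topologically rearrange wires to make the $K_{2,2}$ pattern manifest — that is, to present the two parallel wires as crossing-free bipartite edges so \BialgRule can fire cleanly.

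The main obstacle is bookkeeping rather than conceptual: I must make sure that every intermediate spider I create by splitting has phase $0$ (so that \BialgRule and \CopyRule apply as stated), and that the arity counts on either side of each rule match exactly. Dropping non-zero scalar factors along the way is allowed by the remark preceding the lemma, so the final equality holds up to scalars as required.
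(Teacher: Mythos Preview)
Your plan has a concrete gap at the step where you claim the two unfusions produce the bialgebra pattern. After splitting the green spider into $g_a,g_b$ (joined by one wire, each carrying one of the two original edges to the red spider) and then splitting red into $r_a,r_b$ similarly, you obtain a $4$-cycle whose colour sequence around the cycle is green--green--red--red. The \BialgRule pattern is also a $4$-cycle, but with \emph{alternating} colours green--red--green--red (equivalently: each of the two greens adjacent to each of the two reds, and no same-colour internal edges). Your configuration has only two cross-colour edges together with a green--green edge and a red--red edge, so \BialgRule simply does not match it. Worse, whichever of your four new spiders ends up carrying no external leg is $2$-legged and phaseless, hence an identity by \IdentityRule; removing it collapses the cycle straight back to the original Hopf diagram, so the splits have achieved nothing.

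The paper's argument uses the topological freedom of \zxdiagrams more substantively than ``present the wires crossing-free'': the deformation re-routes legs so that the two-wire Hopf shape is brought into a form on which the calculus rules actually fire, after which a detached scalar sub-diagram appears and is discarded. Spider unfusion with \SpiderRule alone cannot manufacture the two missing cross-colour incidences that \BialgRule would require; the essential first move is the bending of wires, not additional splits.
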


\begin{lemma}\label{lem:pi-state-copy}
  The $\pi$-copy rule,
   \ctikzfig{picopy-rule}
   is derivable in the \zxcalculus.
  \begin{proof}
    To derive the \PiCopyRule rule we use the \SpiderRule rule first
    to split the $\pi$-labelled vertex, and the \PiRule and \CopyRule
    rules do the bulk of the work as shown below:
    \ctikzfig{picopy-proof}    
  \end{proof}  
\end{lemma}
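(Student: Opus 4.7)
The plan is to derive the $\pi$-copy rule from just the spider-fusion rule \SpiderRule, the $\pi$-commutation rule \PiRule, and the copy rule \CopyRule, exactly as hinted in the lemma statement. Intuitively, on the LHS the $\pi$-labelled spider plays two roles simultaneously: it carries a non-trivial phase ($\pi$), and it is the branching point whose legs need to be copied through the opposite-coloured neighbour. The derivation strategy is to separate these two roles via unfusing.

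First, I would apply \SpiderRule in reverse to split the $\pi$-labelled spider into two connected spiders of the same colour: one retaining the phase $\pi$ (with a single outgoing wire), and one with phase $0$ that inherits the remaining legs and acts as a pure branching node. Because fusion says the phases add, and $\pi+0 = \pi$, this step is sound and produces a diagram that matches the LHS patterns of \CopyRule and \PiRule.

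Next, I would use \CopyRule on the newly-exposed phase-$0$ spider: it is now adjacent to a spider of the opposite colour and can be copied through it, producing one copy on each leg of that opposite-coloured spider. After this, every output leg carries a small opposite-coloured spider that is still connected to the (now singleton) $\pi$-phased spider through the branching structure. Applying \PiRule then commutes the $\pi$ through each of these small opposite-coloured spiders, pushing a $\pi$ onto every output wire and cleaning up the intermediate spiders. A final tidy-up with \SpiderRule (forwards) and the identity rule \IdentityRule collapses any leftover degree-$2$ spiders, yielding the RHS of the $\pi$-copy rule. As noted in the paper's scalar remark, any scalar factors produced along the way can be safely discarded.

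The only real obstacle is lining up the order of rewrites so that, after unfusing, the copy rule's preconditions and then the $\pi$-commutation rule's preconditions are both met; once the unfusing in step one is set up correctly, the rest is essentially a direct pattern match against the axioms in Figure~\ref{fig:zx-rules}, so I do not expect any further conceptual difficulty.
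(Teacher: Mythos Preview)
Your high-level plan is the paper's: unfuse via \SpiderRule, then use \PiRule and \CopyRule. The gap is in the specific unfusing you choose and in the order of the two key rules. You split so that the $\pi$-part is an arity-one state and the phase-$0$ part ``inherits the remaining legs'' and sits between that state and the opposite-coloured spider. After this split neither rule actually matches: the $0$-spider adjacent to the opposite colour has two legs, so \CopyRule (which requires an arity-one state plugged into the opposite colour) does not apply, and the arity-one $\pi$-state is adjacent only to the green $0$-spider, so \PiRule does not apply either. Your subsequent description of the diagram after \CopyRule --- output legs carrying small opposite-coloured spiders still connected back to the $\pi$ spider through some branching structure --- is therefore not a configuration that can arise; \CopyRule deletes the branching spider entirely and leaves disconnected same-colour copies.

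The paper unfuses the other way round: the arity-one part carries phase $0$ and the two-legged part carries phase $\pi$, with the two-legged $\pi$-spider sitting between the $0$-state and the opposite-coloured spider. Now \PiRule applies first, pushing a $\pi$ onto every output leg and leaving the $0$-state directly adjacent to the opposite-coloured spider; then \CopyRule applies, and a final forward \SpiderRule on each leg fuses the copied $0$-state with its $\pi$ to give the right-hand side. No \IdentityRule cleanup is needed.
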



\begin{remark}\label{rem:completeness}
  The \zxcalculus is \emph{universal} in the sense that any linear map can be expressed as a \zxdiagram. Furthermore, when restricted to \textit{Clifford \zxdiagrams}, i.e. diagrams whose phases are all multiples of $\pi/2$, the version we present in Figure~\ref{fig:zx-rules} is \emph{complete}. That is, for any two Clifford \zxdiagrams that describe the same linear map, there exists a series of rewrites transforming one into the other. Recent extensions to the calculus have been introduced which are complete for the larger \textit{Clifford+T} family of \zxdiagrams \cite{SimonCompleteness}, where phases are multiples of $\pi/4$, and for \textit{all} \zxdiagrams~\cite{HarnyAmarCompleteness,JPV-universal,euler-zx}.
\end{remark}

Quantum circuits can be translated into \zxdiagrams in a straightforward manner. 
We will take as our starting point circuits constructed
from the following universal set of gates:
\[
\CNOT \ :=\
\left(\begin{matrix}
  1 & 0 & 0 & 0 \\
  0 & 1 & 0 & 0 \\
  0 & 0 & 0 & 1 \\
  0 & 0 & 1 & 0 \\
\end{matrix}\right)
\qquad\qquad
Z_\alpha \ :=\
\left(\begin{matrix}
  1 & 0 \\
  0 & e^{i \alpha}
\end{matrix}\right)
\qquad\qquad
H \ :=\ \frac{1}{\sqrt{2}}
\left(\begin{matrix}
  1 & 1 \\
  1 & -1
\end{matrix}\right)
\]
We fix this gate
set because they admit a convenient representation in terms of
spiders: 
\begin{align}\label{eq:zx-gates}
\CNOT & = \tikzfig{cnot} &
Z_\alpha & = \tikzfig{Z-a} &
H & = \tikzfig{h-alone}
\end{align}
For the \CNOT gate, the green spider is the first (i.e. control) qubit and the red spider is the second (i.e. target) qubit. Other common gates can easily be expressed in terms of these gates. In particular, $S := Z_{\frac\pi2}$, $T := Z_{\frac\pi4}$ and:
\begin{align*}
X_\alpha & = \tikzfig{X-a-expanded} &
\CZ & = \tikzfig{cz-small}
\end{align*}

\begin{remark}
  Note that the directions of the wires in the depictions of the \CNOT and \CZ gates are irrelevant, hence we can draw them vertically without ambiguity. This undirectedness of wires is a general property of \zxdiagrams, and from hence forth we will ignore wire directions without further comment. We will also freely draw wires entering or exiting the diagram from arbitrary directions if the interpretation (i.e. as an input or an output) is either irrelevant or clear from context. 
\end{remark}




\noindent For our purposes, we will define quantum circuits as a special case of \zxdiagrams.

\begin{definition}\label{def:circuit}
  A \emph{circuit} is a \zxdiagram generated by compositions and tensor products of the \zxdiagrams in equation~\eqref{eq:zx-gates}.
\end{definition}

Important subclasses of circuits are \textit{Clifford circuits}, sometimes called stabiliser circuits, which are obtained from compositions of only \CNOT, $H$, and $S$ gates. They are efficiently classically simulable, thanks to the \textit{Gottesman-Knill theorem}~\cite{aaronsongottesman2004}. A unitary is \textit{local Clifford} if it arises from a single-qubit Clifford circuit, i.e. a composition of $H$ and $S$ gates. The addition of $T$ gates yields \textit{Clifford+T circuits}, which are capable of approximating any $n$-qubit unitary to arbitrary precision, whereas the inclusion of $Z_\alpha$ gates for all $\alpha$ enables one to construct any unitary exactly~\cite{NielsenChuang}.

\section{Graph-like ZX-diagrams}\label{sec:graph-like-zx}

Before starting our simplification procedure, we transform \zxdiagrams into the following, convenient form. 



\begin{definition}\label{def:graph-form}
  A \zxdiagram is \emph{graph-like} when:
  \begin{enumerate}
    \item All spiders are Z-spiders.
    \item Z-spiders are only connected via Hadamard edges.
    \item There are no parallel Hadamard edges or self-loops.
    \item Every input or output is connected to a Z-spider and every Z-spider is connected to at most one input or output.
  \end{enumerate}
\end{definition}

A \zxdiagram is called a \textit{graph state} if it is graph-like, every spider is connected to an output, and there are no non-zero phases. This corresponds to the standard notion of graph state from the literature~\cite{hein2006entanglement}. Hence, graph-like \zxdiagrams are a generalisation of graph states to maps where we allow arbitrary phases and some \textit{interior} spiders, i.e. spiders not connected to an input or output.

\begin{lemma}\label{lem:all-zx-are-graph-like}
  Every \zxdiagram is equal to a graph-like \zxdiagram.
\end{lemma}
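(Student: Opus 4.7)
The plan is to give a constructive procedure that rewrites an arbitrary \zxdiagram into graph-like form, verifying each of the four conditions of Definition~\ref{def:graph-form} in turn, using only rules of the \zxcalculus (including the derived ones from Lemmas~\ref{lem:hopf-law} and~\ref{lem:pi-state-copy}).

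First, I would establish condition~(1) by colour-changing every X-spider. Using the defining equation~\eqref{eq:Hdef} of the Hadamard box (which is equivalent to the fact that an $X$-spider equals a $Z$-spider with a Hadamard on each leg), each red spider can be replaced by a green spider surrounded by Hadamard edges, producing a diagram with Z-spiders only, connected by a mixture of plain and Hadamard edges. Next, for condition~(2), I would repeatedly apply the spider fusion rule \SpiderRule to collapse any two Z-spiders joined by a plain wire, and use the \HCancelRule rule to remove pairs of adjacent Hadamards that appear on the same wire. This process strictly decreases the number of spiders (or the number of Hadamard boxes, in the auxiliary case), so it terminates with every Z--Z connection being a single Hadamard edge.

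For condition~(3), I would eliminate self-loops and parallel Hadamard edges. A Hadamard self-loop on a Z-spider is removed by unfusing the spider so that the loop sits between two copies of the same colour joined by a plain wire on one side and a Hadamard edge on the other; applying the colour change to one leg turns this into an X--Z pair connected by two wires, which is exactly the antipode situation of Lemma~\ref{lem:hopf-law} and so disconnects, leaving an isolated piece that contributes only a scalar. Parallel Hadamard edges between two Z-spiders are treated analogously: unfuse each endpoint so the two parallel Hadamard edges join a small Z-spider to a small Z-spider, colour-change one side to get a Z--X pair connected by two plain wires, and apply \HopfRule to delete both edges. Each such step strictly decreases the number of Hadamard edges (or self-loops), so the process terminates.

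Finally, for condition~(4), I would pre-process the boundary. For any input or output wire that runs straight to an output/input (an identity wire) or to another boundary without passing through a spider, I insert a pair of Z-spiders joined by a Hadamard edge (using \IdentityRule and \HCancelRule to show this is an identity up to scalar), ensuring each boundary touches a Z-spider; then re-apply the fusion/Hadamard-cancellation steps above if needed. For any Z-spider incident to more than one boundary wire, or to a boundary wire \emph{and} another spider via a plain edge rather than a Hadamard edge, I use the reverse direction of \SpiderRule to unfuse the extra boundary legs onto a fresh Z-spider, again joined by a Hadamard edge (inserted as a pair of Hadamards with \HCancelRule). After all four stages the diagram satisfies Definition~\ref{def:graph-form}. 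The main subtlety is making sure that the rewrites for conditions~(3) and~(4) do not reintroduce violations of conditions~(1) or~(2); this is handled by always finishing each phase by re-running the fusion/Hadamard-cancellation sweep, and by noting that the total complexity measure (number of non-$Z$ spiders, then number of plain Z--Z edges, then number of parallel edges and self-loops, then number of offending boundary incidences) decreases lexicographically, guaranteeing termination.
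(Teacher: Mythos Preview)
Your approach is essentially the same as the paper's: colour-change via \HadamardRule, fuse along plain edges with \SpiderRule while cancelling double Hadamards with \HCancelRule, remove parallel Hadamard edges and self-loops using the antipode rule \HopfRule, and finally patch up the boundaries with \IdentityRule and \HCancelRule. One detail to correct: removing a Hadamard self-loop does \emph{not} merely contribute a scalar---it shifts the phase of the spider by $\pi$ (as the paper's derivation shows, by expanding the Hadamard via~\eqref{eq:Hdef} and then applying \HopfRule). This does not affect the validity of the argument, since the resulting spider is still a $Z$-spider and the diagram remains graph-like, but your description of that step is inaccurate as stated. You also do not explicitly treat plain (non-Hadamard) self-loops, which can survive the fusion step; the paper handles these separately with \IdentityRule followed by \SpiderRule.
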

\begin{proof} Starting with an arbitrary \zxdiagram, we apply \HadamardRule to turn all red spiders into green spiders surrounded by Hadamard gates. We then remove excess Hadamards via \HCancelRule. Any non-Hadamard edge is then removed by fusing the adjacent spiders with \SpiderRule. Any parallel Hadamard edges or self-loops can be removed via the following 3 rules:
\begin{equation}\label{eq:parallel-edges-loops}
\tikzfig{par-edge-rem} \qquad
\tikzfig{self-loop-rem} \qquad
\tikzfig{h-self-loop-rem}
\end{equation}
The first one of these follows by using \HopfRule:
\ctikzfig{double-had-edge}
The second one follows by applying \IdentityRule from right to left and then using \SpiderRule. For the last one we do:
\begin{equation*}
\tikzfig{self-loop}
\end{equation*}
Here, the first step is simply the definition of the Hadamard box, and in the last step we use the antipode rule \HopfRule and implicitly drop the scalar X-spider that we are left with.

At this point, the first 3 conditions are satisfied. To satisfy condition 4, we must deal with two special cases: (a) inputs/outputs not connected to any Z-spider, and (b) multiple inputs/outputs connected to the same Z-spider. For case (a), there are only two possibilities left: either an input and an output are directly connected (i.e. a `bare wire'), or they are connected to a Hadamard gate. These situations can both be removed by right-to-left applications of \IdentityRule and \HHRule as follows:
\ctikzfig{ident-graph-form-2}
For case (b), we can again use \IdentityRule and \HHRule to introduce `dummy' spiders until each input/output is connected to a single spider:
\ctikzfig{ident-graph-form}
Once this is done, the resulting ZX-diagram satisfies conditions 1-4.
\end{proof}





A useful feature of a graph-like \zxdiagram is that much of its structure is captured by its underlying \textit{open graph}.

\begin{definition}\label{def:open-graph}
  An \emph{open graph} is a triple $(G,I,O)$ where $G = (V,E)$ is an undirected graph, and $I \subseteq V$ is a set of \emph{inputs} and $O \subseteq V$ a set of \emph{outputs}. For a \zxdiagram $D$, the \emph{underlying open graph} $\UG D$ is the open graph whose vertices are the spiders of $D$, whose edges correspond to Hadamard edges, and whose sets $I$ and $O$ are the subsets of spiders connected to the inputs and outputs of $D$, respectively.
\end{definition}

See Figure~\ref{fig:underlying-graph} for an example of converting a circuit into a graph-like diagram with Lemma~\ref{lem:all-zx-are-graph-like} and how the associated underlying graph is found.


\begin{figure}[!htb]
	\centering
	\tikzfig{graph-like-ex}
	\caption{A circuit, which is transformed into an equivalent graph-like \zxdiagram, and its underlying open graph.}
    \label{fig:underlying-graph}
\end{figure}




A graph-like \zxdiagram can be seen as an open graph with an assignment of angles to each of its vertices. Note that, because of Definition~\ref{def:graph-form}, the sets $I$ and $O$ in $\UG D$ will always be disjoint. When discussing open graphs we will use the notation $\comp I:=V\setminus I$ and $\comp O:=V\setminus O$ for the non-inputs and non-outputs respectively. The set of neighbours of a vertex $v\in V$ will be denoted by $N(v)$.

We are now ready to define a graph-theoretic condition that will be instrumental in our circuit extraction procedure.


\begin{definition}{\cite{mhalla2011graph}}\label{def:gflow}
Given an open graph $G$, a \emph{focused gFlow} $(g,\prec)$ on $G$
consists of a function $g:\comp O \to 2^{\comp I}$ and a partial order $\prec$
on the vertices $V$ of $G$ such that for all $u\in \comp O$, 
\begin{enumerate} 
\item $\odd G {g(u)}\cap \comp O = \{u\}$
\item  $\forall v\in g(u), u\prec v$
\end{enumerate}
where $2^{\comp I}$ is the powerset of $\comp I$ and $\odd G A:=\{v\in V(G)~|~|N(v)\cap A| \equiv 1\bmod 2\}$ is the \emph{odd neighbourhood} of $A$.
\end{definition}

\begin{example}
While not strictly necessary for the following, we briefly give some intuition for the
conditions in Definition~\ref{def:gflow}. They can be understood in a more operational way with the following game. Suppose we consider an open graph $G$ whose vertices are labelled with $0$'s and $1$'s, where a $1$ indicates the presence of an error. Define an operation $\textbf{flip}_v$, which flips all of the bits on the \textit{neighbours} of a given vertex $v$. Our goal is to propagate all of the errors present in $G$ to the outputs using only applications of the operation $\textbf{flip}_v$. For example:
\begin{equation}\label{eq:has-gflow}
\tikzfig{has-gflow}
\ \  \xrightarrow{\textbf{flip}_v} \ \ 
\tikzfig{has-gflow2}
\ \  \xrightarrow{\textbf{flip}_w} \ \ 
\tikzfig{has-gflow3}
\end{equation}
For some open graphs and configurations of errors, this task might be impossible. For example, there is no solution for the following graph:
\ctikzfig{no-gflow}
However, we can always succeed if we are given the following data: an ordering $\prec$ of vertices which give a direction of `time' going from inputs to outputs, and, for each vertex, a \textit{correction set} $g(v)$ of vertices in the future of $v$ (w.r.t. $\prec$) such that applying $\textbf{flip}_w$ for all $w \in g(v)$ flips the bit on $v$ without affecting any other bits, except for those labelling outputs and $g(v)$ itself. By repeatedly finding the minimal vertex $v$ (w.r.t. $\prec$) with an error and applying $\textbf{flip}_w$ to all $w \in g(v)$, the procedure will eventually propagate all of the $1$'s to the outputs of $G$.
\end{example}

\begin{remark}\label{rem:focused-gflow-v-gflow}
  Generalised flow techniques were introduced in the context of measurement-based quantum computing \cite{Patterns,DKPP09}. \emph{Focused gFlow} is a special case of the standard notion of gFlow~\cite{GFlow}, where the latter allows other vertices $v$ in the set $\odd G {g(u)}\cap \comp O$, provided they are in the future of $u$ (i.e. $u \prec v$). However, as a property of graphs, the two notions are equivalent: an open graph has a gFlow if and only if it has a focused gFlow. We will rely on this equivalence in Appendix~\ref{sec:circuits-causal-flow} to prove the following Lemma.
\end{remark}

\begin{lemma}\label{lem:circuits-have-gflow}
  If $D$ is a graph-like \zxdiagram obtained from a circuit by the procedure of
  Lemma~\ref{lem:all-zx-are-graph-like}, then $\UG D$ admits a focused gFlow.
  \begin{proof}
    See Appendix~\ref{sec:circuits-causal-flow}.
  \end{proof}
\end{lemma}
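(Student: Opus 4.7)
The plan is to proceed by induction on the structure of the circuit, using the three generating gates $\CNOT$, $Z_\alpha$, and $H$ as base cases and the two composition operations (sequential and parallel) as the inductive steps. Rather than construct a focused gFlow directly, I would work with ordinary gFlow throughout, since it composes more naturally, and then upgrade to a focused gFlow using the equivalence noted in Remark~\ref{rem:focused-gflow-v-gflow}.

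For the base cases, each generating gate, after being put into graph-like form by Lemma~\ref{lem:all-zx-are-graph-like}, produces only a small open graph on which a gFlow can be written down by inspection: $Z_\alpha$ gives a short chain of Z-spiders joined by Hadamard edges; $H$ gives a single Hadamard edge between an input and an output spider; and $\CNOT$ gives a small diagram whose gFlow is a routine exercise. For the inductive steps, tensor product is immediate, as the disjoint union of open graphs with gFlows inherits a gFlow by disjoint union of the orders and correction functions. Sequential composition is the substantive case: boundary vertices that were outputs of the first factor and inputs of the second become interior to the composite, so their correction sets and their positions in the partial order must be updated. The combined order places the first factor's vertices strictly before those of the second (with interface identifications), and the combined correction function uses each factor's $g$ on its own vertices, extending the corrections of newly interior boundary spiders forward into the second factor as needed to preserve the odd-neighbourhood condition.

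The main obstacle is verifying condition (1) of Definition~\ref{def:gflow}, namely $\odd{G}{g(u)} \cap \comp O = \{u\}$, at the gluing interface of sequential compositions, since the Hadamard edges introduced by the graph-like conversion and any dummy spiders inserted along the boundary can disturb parities in the odd neighbourhoods. The cleanest route I see is to pass through the corresponding measurement-based pattern: a circuit always translates into a pattern with a \emph{causal flow} (each non-output vertex has a unique successor along its qubit wire), and causal flow is a well-known special case of gFlow~\cite{DKPP09,GFlow}, so the existence of a gFlow on $\UG D$ follows from the standard MBQC construction applied to the circuit. Invoking Remark~\ref{rem:focused-gflow-v-gflow} then gives the focused gFlow claimed in the lemma.
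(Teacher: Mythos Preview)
Your final route---show the underlying open graph has a \emph{causal flow}, then use causal flow $\Rightarrow$ gFlow $\Rightarrow$ focused gFlow (Remark~\ref{rem:focused-gflow-v-gflow})---is exactly what the paper does. The paper, however, does not appeal to an off-the-shelf MBQC translation; it constructs the causal flow explicitly on $\UG{D'}$ by assigning each spider a qubit-line index $q_v$ and, after fusion, an interval $[s_v,t_v]$ of row-indices, then taking $f(v)$ to be the next spider along the same qubit line and $v\prec w \iff t_v < t_w$. This is worth doing explicitly because you need causal flow on the \emph{specific} open graph produced by Lemma~\ref{lem:all-zx-are-graph-like} (with its fusions, dummy spiders, and Hadamard-edge conventions), not merely on ``the'' MBQC pattern of the circuit; your citation to the MBQC literature leaves that identification implicit. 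Your initial structural-induction plan is a different route, but you correctly identify the gluing step as the obstacle and abandon it in favour of the causal-flow argument, so the two proofs ultimately coincide.
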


\section{Local complementation and pivoting}\label{sec:lcomppivot}


Local complementation is a graph transformation introduced by
Kotzig \cite{kotzig}.
\begin{definition}
  Let $G$ be a 
  graph and let $u$ be a vertex of $G$. 
  The \emph{local complementation} of $G$ according to $u$, 
  written as $G\star u$, is a graph which has the same vertices as $G$, 
  but all the neighbours $v,w$ of $u$ are connected in $G\star u$ if and only if
  they are not connected in $G$. 
  All other edges are unchanged.
\end{definition}

\begin{example} Local complementations according to the vertices $a$ and $b$ are given by:
\begin{equation*}
G\quad\tikzfig{graph1-lab}\qquad\qquad G\star a\quad\tikzfig{graph1-lab-1}\qquad\qquad (G\star a) \star b\quad\tikzfig{graph1-lab-2}
\end{equation*}
\end{example}

A related graph-theoretic operation is pivoting, which takes place at a pair of adjacent vertices.
\begin{definition}
  Let $G$ be a 
  graph, and let $u$ and $v$ be a pair of connected vertices in $G$. 
  The \emph{pivot} of $G$ along the edge $uv$ is the graph $G\wedge uv := G\star u \star v \star u$.
\end{definition}
On a graph, pivoting consists in exchanging $u$ and $v$, and complementing the edges
between three particular subsets of the vertices: the common neighbourhood of $u$ and $v$ (i.e.~$N_G(u)\cap N_G(v)$), the exclusive neighbourhood of $u$ (i.e.~$N_G(u)\setminus (N_G(v)\cup \{v\})$), and exclusive neighbourhood of $v$ (i.e.~$N_G(v)\setminus (N_G(u)\cup \{u\})$):
\[G \quad\tikzfig{pivot-L}\qquad\qquad \quad G\wedge uv \quad\tikzfig{pivot-R}
\]

For a more concrete illustration of pivoting see Example~\ref{ex:pivot} or Equation~\eqref{eq:gs-pivot} below.
\begin{example}\label{ex:pivot}
In the graph $G$ below, $\{a, b\}$ is in the neighbourhood of $u$ alone, $\{d\}$ is in the neighbourhood of $v$ alone, and $\{c\}$ is in the the neighbourhood of both. To perform the pivot along $uv$, we complement the edges connecting $\{a, b\}$ to $\{d\}$, $\{d\}$ to $\{c\}$ and $\{a, b\}$ to $\{c\}$. We then swap $u$ and $v$:
\[
G\quad\tikzfig{graph2} \qquad \quad G\wedge uv\quad\tikzfig{graph2-pivot}
\]
\end{example}

Our primary interest in local complementation and pivoting is that each corresponds to a transformation of \zxdiagrams. In the special case where a \zxdiagram $D$ is a graph-state, it is possible to obtain a diagram $D'$ where $\UG{D'} = \UG{D} \star u$ by applying an $X_{-\pi/2}$ gate on the spider corresponding to $u$ and $Z_{\pi/2}$ on all of the spiders in $N(v)$:
\begin{equation}\label{eq:gs-local-comp}
\tikzfig{local-comp-ex}
\end{equation}
This is a well-known property of graph states, and a derivation using the rules of the \zxcalculus was given in Ref.~\cite{DP1}. Similarly, it was shown in Ref.~\cite{DP3} that a pivot $\UG{D'} = \UG{D} \wedge uv$ can be introduced by applying Hadamard gates on $u$ and $v$ and $Z_{\pi}$ gates on $N(u) \cap N(v)$:
\begin{equation}\label{eq:gs-pivot}
\tikzfig{pivot-desc}
\end{equation}
Note that the swap on the RHS comes from the vertices $u$ and $v$ being interchanged by the pivot.

The following theorem will be crucial for our extraction routine. It shows that the existence of a focused gFlow is preserved by local complementation (resp.~pivoting) followed by the deletion of the vertex (resp.~the pair of vertices) on which the transformation is applied: 


\begin{theorem}\label{thm:gflow-preserve}
Let $(G,I,O)$ be an open graph that admits a focused gFlow, then $(G', I, O)$ also admits a gFlow in the following two cases: 
\begin{enumerate}
\item for $u \notin I \cup O$, setting $G' := (G \star u) \backslash \{ u \}$
\item for adjacent $u,v \notin I \cup O$, setting $G' := (G \wedge uv) \backslash \{ u, v \}$
\end{enumerate}
where $G\setminus W$ is the graph obtained by deleting the vertices in $W$ and any incident edges.
\end{theorem}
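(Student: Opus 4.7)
The plan is to produce an explicit gFlow for $G'$ from the focused gFlow $(g, \prec)$ on $G$. In both cases I would keep the new partial order as the restriction $\prec|_{V(G')}$ and define a new flow function $g'$ by modifying $g$ to eliminate references to the deleted vertex (or vertices) while compensating for how odd neighbourhoods change under local complementation or pivoting. By Remark~\ref{rem:focused-gflow-v-gflow} it is enough to exhibit a (possibly unfocused) gFlow, which gives some breathing room in the construction.

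For case~1, the essential algebraic fact is that local complementation at $u$ modifies odd neighbourhoods by a correction supported on $N_G(u)$ whose form depends only on the parity $|A \cap N_G(u)| \bmod 2$ of the set $A$ in question. Since $u \in \odd{G}{g(u)}$ and $u \notin g(u)$, this parity equals $1$ for $A = g(u)$, and moreover $\odd{G}{g(u)} \cap \comp O = \{u\}$. The natural candidate is then
\[
g'(w) \;=\; \begin{cases} g(w) & \text{if } u \notin g(w),\\[2pt] (g(w)\,\triangle\,g(u)) \setminus \{u\} & \text{if } u \in g(w), \end{cases}
\]
for $w \in \comp O \setminus \{u\}$. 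The intuition is that $g(u)$ behaves like a ``correction that flips only $u$'' in $\comp O$, so it can be substituted wherever $u$ appears in another correction set. Verifying $\odd{G'}{g'(w)} \cap \comp{O'} = \{w\}$ then reduces to combining the local-complementation identity with this property of $g(u)$, and the order condition is preserved because $g(u) \subseteq \{v : u \prec v\}$ in the original gFlow.

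Case~2 can in principle be reduced to three applications of case~1 via the identity $G \wedge uv = G \star u \star v \star u$, provided case~1 is first established in a non-deleting form and the removal of $u$ and $v$ is postponed until after all three complementations. A cleaner route is to posit an analogous two-substitution formula for $g'(w)$ with $w \neq u, v$, in which $u$ and $v$ appearing in $g(w)$ are replaced using $g(v)$ and $g(u)$ respectively (with a symmetric-difference adjustment when both belong to $g(w)$), and then verify the gFlow conditions using the pivoting version of the odd-neighbourhood identity, which is the composition of three local-complementation identities.

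I expect the main obstacle to be the parity bookkeeping across the various subcases. The verification splits on whether $w$ is adjacent to $u$ (and, for pivoting, also to $v$) and on whether $u, v \in g(w)$; each subcase reduces to a short $\bmod\,2$ calculation, but keeping the symmetric-difference conventions and sign conditions consistent across all of them is where most of the real work sits.
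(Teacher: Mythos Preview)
Your plan for case~1 is correct and is precisely the simpler, non-focused construction the paper records in a remark following its main lemma for this case. The paper itself actually proves more: it gives a recursively-defined $g'$ that yields a \emph{focused} gFlow directly, at the cost of extra correction terms indexed by $R_w := N_G(u)\cap g(w)\cap \comp O$. Since you only need existence of a gFlow (and then invoke the focused/gFlow equivalence), your simpler version suffices for the theorem.

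For case~2, however, you are over-engineering, and both routes you sketch have issues. Route~(a) has a genuine gap: after three non-deleting local complementations you obtain a gFlow on $G\wedge uv$, but then you still have to delete the internal vertices $u$ and $v$, and deleting non-output vertices does \emph{not} preserve gFlow in general---justifying that deletion is essentially the whole content of case~2. (Nor can you reduce to two \emph{deleting} applications of case~1: in general $(G\star u\setminus\{u\})\star v\setminus\{v\}\neq G\wedge uv\setminus\{u,v\}$.) Route~(b) is salvageable but its order condition is not obvious: if $u\in g(w)$ but $v\notin g(w)$, substituting $g(v)$ introduces vertices that are $\prec$-above $v$, not a priori $\prec$-above $w$. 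The paper's approach avoids all of this with a striking simplification: for pivoting, the bare formula $g'(w):=g(w)\setminus\{u,v\}$ already yields a \emph{focused} gFlow on $G\wedge uv\setminus\{u,v\}$---no substitution of $g(u)$ or $g(v)$ is required. The reason is that, in the pivoted graph, $N_{G\wedge uv}(u)=N_G(v)$ and $N_{G\wedge uv}(v)=N_G(u)$, so removing $u$ (resp.\ $v$) from $g(w)$ changes the odd neighbourhood by exactly $N_G(v)$ (resp.\ $N_G(u)$), which cancels the $N_G[v]$ (resp.\ $N_G[u]$) correction appearing in the pivot odd-neighbourhood identity. The four-case verification is then a few lines each.
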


\begin{proof}
The two cases are proved in Appendix~\ref{sec:pres-focus-gflow}, Lemmas \ref{lem:gflow-lcomp} and \ref{lem:gflow-pivot} respectively.
\end{proof}


\section{A simplification strategy for circuits}\label{sec:simp}
We now have all the necessary machinery to introduce our simplification routine. The general idea is to use local complementation and pivoting based rewrite rules to remove as many \emph{interior} spiders as possible. A spider is called interior when it is not connected to an input or an output, otherwise it is called a \textit{boundary} spider.

\begin{definition}
  We call a spider \emph{Pauli} if its phase is a multiple of $\pi$, and \emph{Clifford} if its phase is a multiple of $\frac\pi2$. If the phase of a Clifford spider is an odd multiple of $\frac\pi2$ (and hence non-Pauli), we call this a \emph{proper Clifford} spider.
\end{definition}

\noindent The graph-like \zxdiagram resulting from the translation of a Clifford circuit contains only Clifford spiders, since the only time the phase changes on a spider is during a spider-fusion step, in which case the phases are added together.

We will show that our procedure is able to eliminate \emph{all} interior proper Clifford spiders and all Pauli spiders adjacent either to a boundary spider or any interior Pauli spider. In particular, for Clifford circuits, we will obtain a pseudo-normal form which contains no interior spiders (cf. Section~\ref{sec:cliffordT}).

The main rewrite rules we use are based on local complementation and pivoting. The first one allows us to remove any interior proper Clifford spider while the second one removes any two connected interior Pauli spiders.
\begin{lemma}\label{lem:lc-simp}
The following rule holds in the \zxcalculus:
\begin{equation}\label{eq:lc-simp}
  \hfill \tikzfig{lc-simp} \hfill
\end{equation}
where the RHS is obtained from the LHS by performing a local complementation at the marked vertex, removing it, and updating the phases as shown.
\end{lemma}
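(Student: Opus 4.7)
The plan is to derive the rule \eqref{eq:lc-simp} directly from the graph-state local complementation equation \eqref{eq:gs-local-comp}, which was itself derived in the ZX-calculus in Ref.~\cite{DP1}. The idea is that the $\pm\pi/2$ phase sitting on the interior spider is precisely what is needed to trigger a local complementation of its neighbourhood, and once triggered the interior spider becomes phaseless and can be discarded up to scalar.

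First, I would use \SpiderRule to split the interior spider into an unphased $Z$-spider (still carrying all $n$ Hadamard edges to the neighbourhood) connected by a single plain edge to a $Z_\alpha$ state, where $\alpha=\pm\pi/2$. Because the unphased central spider is connected to each neighbour through a Hadamard, I can apply \HadamardRule (color change) at the central vertex: this turns it into an $X$-spider with ordinary (non-Hadamard) edges to the neighbours, and puts a Hadamard on the short stub carrying the $Z_\alpha$ state, turning that stub into an $X_\alpha$ phase. We are then looking at a graph state, with phases $\alpha_i$ sitting on the neighbours, together with a single $X_{\pm\pi/2}$ at the interior vertex.

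Second, I would invoke equation \eqref{eq:gs-local-comp}, which says exactly that an $X_{-\pi/2}$ applied at a vertex $u$ of a graph state equals the locally complemented graph state with a $Z_{\pi/2}$ on every neighbour of $u$. Using this (or its adjoint for the opposite sign), the $X_{\pm\pi/2}$ at the central vertex becomes a local complementation of $N(u)$ together with a $Z_{\mp\pi/2}$ on each neighbour. These new neighbour phases fuse into the existing $\alpha_i$ via \SpiderRule, producing $\alpha_i \mp \pi/2 = \alpha_i - \alpha$, which is exactly the update stated in \eqref{eq:lc-simp}. At this point the central vertex is unphased and no longer plays a role: its contribution reduces to a scalar (or an empty diagram), which we discard per the scalar convention of Figure~\ref{fig:zx-rules}.

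The main obstacle I anticipate is careful bookkeeping of signs: equation \eqref{eq:gs-local-comp} is asymmetric in its $X_{-\pi/2}/Z_{+\pi/2}$ pairing, and the color-change step further flips a sign, so one must verify that both choices $\alpha=+\pi/2$ and $\alpha=-\pi/2$ produce exactly the update $\alpha_i \mapsto \alpha_i - \alpha$ rather than $\alpha_i + \alpha$. A secondary subtlety is justifying that \eqref{eq:gs-local-comp}, stated for graph states in which every spider is phaseless and connected to an output, may be applied locally inside our graph-like diagram: this is legitimate because the neighbour phases $\alpha_i$ and the further subdiagrams hanging off each $v\in N(u)$ can be temporarily pulled out along an extra wire via \SpiderRule, so that \eqref{eq:gs-local-comp} acts only on the graph-state fragment formed by $u$ together with its neighbours.
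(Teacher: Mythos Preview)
Your high-level plan—unfuse the $\pm\pi/2$ phase onto a stub, invoke the graph-state local complementation identity~\eqref{eq:gs-local-comp}, and then discard the central vertex—is exactly the paper's strategy. But two of your steps do not go through as written.

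First, the colour-change step is a detour that does not set up what you claim. After applying~\HadamardRule to the central spider and fusing the $X_\alpha$ state in, you have an $X$-spider of phase~$\alpha$ with plain edges to the neighbours. That is \emph{not} ``a graph-state vertex with an $X_{\pm\pi/2}$ gate applied to its output'', which is what~\eqref{eq:gs-local-comp} requires; indeed, colour-changing back simply returns the original diagram. The paper skips this step: it leaves the central vertex as a $Z$-spider, treats the stub carrying the unfused $Z_{\pm\pi/2}$ state as $u$'s output wire, and applies~\eqref{eq:gs-local-comp} directly to that graph-state fragment. The rule then deposits an $X_{\pm\pi/2}$ gate on the stub (between $u$ and the state) and $Z_{\mp\pi/2}$ on the neighbours.

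Second, after~\eqref{eq:gs-local-comp} has fired, the central vertex $u$ does not ``reduce to a scalar'': it is still a phaseless $Z$-spider in the locally complemented graph $G\star u$, connected by Hadamard edges to all the neighbours. The paper removes it by first observing that a one-legged $Z_{\pm\pi/2}$ state equals (up to scalar) an $X_{\mp\pi/2}$ state—this is the auxiliary identity~\eqref{eq:s-state-eq}—so the stub now carries $X_{\pm\pi/2}$ composed with an $X_{\mp\pi/2}$ state, which fuses to a phaseless $X$-state. That state is then copied through $u$ by~\CopyRule, deleting $u$ and depositing phaseless $X$-states on every leg; these pass through the Hadamards and fuse harmlessly into the neighbours. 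This copy-through step is precisely where $u$ is eliminated, and it is the part you are hand-waving.
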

\begin{lemma}\label{lem:pivot-simp}
The following rule holds in the \zxcalculus:
\begin{equation}\label{eq:pivot-simp}
  \hfill \tikzfig{pivot-simp} \hfill
\end{equation}
where the RHS is obtained from the LHS by performing a pivot at the marked vertices, removing them, and updating the phases as shown.
\end{lemma}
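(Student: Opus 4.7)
My plan is to reduce the statement to the graph-state pivot identity~\eqref{eq:gs-pivot}, combined with the $\pi$-copy rule (Lemma~\ref{lem:pi-state-copy}) to handle the Pauli phases. Let $u$ and $v$ denote the two marked interior Pauli spiders with phases $j\pi$ and $k\pi$; in graph-like form they are joined by a Hadamard edge, and I partition the rest of their combined neighborhood as $A = N(u)\setminus(N(v)\cup\{v\})$, $B = N(v)\setminus(N(u)\cup\{u\})$, and $C = N(u)\cap N(v)$, with every edge a Hadamard edge.

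First I would unfuse each Pauli phase using \SpiderRule in reverse, so that $u$ becomes a phase-$0$ spider $u'$ retaining the original Hadamard-edge connectivity plus a degree-$1$ ``Pauli dot'' of phase $j\pi$ attached to $u'$ by a plain wire, and likewise for $v$. Next I would apply the graph-state pivot identity~\eqref{eq:gs-pivot} to the phase-$0$ spiders $u', v'$: this rewires the Hadamard-edge structure among $A, B, C$ into the pivoted one (toggling the edges between $A$--$B$, $A$--$C$, and $B$--$C$), places a $Z_\pi$ correction on each common neighbor in $C$, attaches a Hadamard box to the plain wire leading to each Pauli dot, and swaps $u'\leftrightarrow v'$. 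Finally I would use \HadamardRule to convert each Pauli dot from a degree-$1$ green spider into a degree-$1$ red spider (a $\ket{j}$- or $\ket{k}$-type state), and then use the copy rule to propagate these states through $u'$ and $v'$: each post-pivot neighbor of $u'$ inherits a copy of $\ket{j}$, which by a further application of \HadamardRule and spider fusion adds $+j\pi$ to its phase; analogously $v'$'s propagation adds $+k\pi$ to each of its neighbors. After propagation, $u', v'$ collapse to pure scalars and are discarded. Combining these contributions with the $Z_\pi$ gates on $C$, the vertices in $A$ gain $+k\pi$, those in $B$ gain $+j\pi$, and those in $C$ gain $+(j+k)\pi$, as required.

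The main obstacle will be the phase bookkeeping on the common neighbors $C$, which receive contributions from three distinct sources: the two Pauli propagations and the $Z_\pi$ correction arising from the graph-state pivot. Verifying that these sum to exactly $+(j+k)\pi$ on $C$, rather than an extraneous $+(j+k+1)\pi$, requires careful tracking of the signs and scalar factors arising from \HadamardRule (when converting Pauli phases across Hadamard boxes) and from the copy rule, and may involve the antipode rule (Lemma~\ref{lem:hopf-law}) to cancel residual $\pi$ phases that should be absorbed into the scalars produced when $u'$ and $v'$ are discarded. A secondary subtlety is confirming the direction of the pivot swap in~\eqref{eq:gs-pivot}, so that the propagated $j\pi$ lands on $B$ (not $A$) and the $k\pi$ lands on $A$ (not $B$).
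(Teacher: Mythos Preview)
Your proposal is correct and follows essentially the same route as the paper: unfuse the Pauli phases with \SpiderRule, apply the graph-state pivot identity~\eqref{eq:gs-pivot} to the now phase-free spiders, colour-change the dangling Pauli dots to X-spiders via \HadamardRule, and then use the copy rule so they fan out and fuse into the neighbours.

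One small correction to your bookkeeping: the ``extraneous'' $+\pi$ on $C$ that you worry about is not extraneous---it is part of the stated rule. The common neighbours receive $j\pi$ from one propagation, $k\pi$ from the other, \emph{and} the $Z_\pi$ correction from~\eqref{eq:gs-pivot}, so their phase update is $+(j+k+1)\pi$, not $+(j+k)\pi$. With that adjustment your argument goes through without needing the antipode rule or any delicate scalar tracking; the paper's proof is in fact shorter than your outline because it simply records these two diagrammatic steps.
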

The proofs of these lemmas can be found in Appendix~\ref{sec:zx-reduction-rules}.

We can additionally apply~\eqref{eq:pivot-simp} to remove an interior Pauli spider that is adjacent to a boundary spider. To do this, we first turn the boundary spider into a (phaseless) interior spider via the following transformation, which follows from the rules \SpiderRule, \IdentityRule, and \HHRule:
\begin{equation}\label{eq:pivot-unfuse}
  \tikzfig{pivot-unfuse}
\end{equation}
After this transformation, we can perform~\eqref{eq:pivot-simp} on the two spiders labelled $\{u, v\}$ to remove them. For our purposes, we can then treat $w$ as a boundary spider, and save the single-qubit unitaries to the right of $w$ separately.

\begin{theorem}\label{thm:simp}
There exists a terminating procedure which turns any graph-like ZX-diagram $D$ into a graph-like ZX-diagram $D'$ (up to single-qubit unitaries on inputs/outputs) which does not contain
\begin{enumerate}
\item interior proper Clifford spiders,
\item adjacent pairs of interior Pauli spiders,
\item and interior Pauli spiders adjacent to a boundary spider.
\end{enumerate}
In particular, if $D$ only contains Clifford spiders, then $D'$ contains no interior spiders.
\end{theorem}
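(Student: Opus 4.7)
The plan is to run a nondeterministic loop over the three rewrites derived above and then to verify soundness, termination, and the three stated conditions. Starting from a graph-like \zxdiagram $D$ (which exists by Lemma~\ref{lem:all-zx-are-graph-like}), at each iteration I would try the following rewrites in order and apply whichever fires first: (a) if there is an interior proper Clifford spider $u$, use Lemma~\ref{lem:lc-simp} to delete $u$; (b) else if there is an edge between two interior Pauli spiders $u,v$, use Lemma~\ref{lem:pivot-simp} along $uv$ to delete both; (c) else if there is an interior Pauli spider $u$ adjacent to a boundary spider $w$, first apply~\eqref{eq:pivot-unfuse} to split $w$ into a phaseless interior spider $v$ plus a new boundary spider carrying the original phase (and a Hadamard edge) as a single-qubit unitary on the corresponding wire, then apply Lemma~\ref{lem:pivot-simp} to the now-interior Pauli pair $u,v$.

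Soundness of each iteration is immediate from the cited lemmas together with~\eqref{eq:pivot-unfuse}, and graph-likeness is preserved because local complementation and pivoting send simple graphs to simple graphs, while the unfusion inserts a fresh Z-spider on a Hadamard edge between boundary and interior. The single-qubit unitaries peeled off in step (c) accumulate on the input/output wires, so the final diagram $D'$ equals $D$ up to local unitaries at the boundary, as claimed.

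For termination I would use the measure $\mu(D)$ equal to the number of \emph{interior Clifford} spiders of $D$ (Pauli plus proper Clifford). Step (a) removes $u$ and shifts its neighbours' phases by $\pm\pi/2$; these shifts send Clifford phases to Clifford phases and non-Clifford to non-Clifford, so no new interior Clifford spiders are created and $\mu$ drops by one. Step (b) removes $u$ and $v$ and shifts some neighbours' phases by $\pi$, which preserves Clifford-ness, so $\mu$ drops by two. Step (c) first raises $\mu$ by one (by making the unfused $v$ interior with phase $0$) and then drops it by two via the pivot, a net decrease of one. Since $\mu$ is a non-negative integer the loop halts, and at that point none of (a)--(c) can fire, which is exactly conditions 1--3 of the theorem.

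For the ``in particular'' clause, I observe that the Clifford property of the spider phases is an invariant of the loop: every phase shift that occurs is by a multiple of $\pi/2$, and the only new spider ever introduced (in step (c)) has phase $0$; hence if $D$ starts with only Clifford spiders, so does $D'$, and combined with conditions 1--3 this forces $D'$ to contain no interior spiders at all. The main obstacle I anticipate is step (c): one must confirm that the unfusion followed by the pivot really yields a graph-like diagram near the boundary, ruling out self-loops or parallel Hadamard edges that could be created when $u$, $v$, and their shared neighbours interact. This is a routine local check, but the Hadamard-edge bookkeeping around the boundary is the most error-prone part of the argument.
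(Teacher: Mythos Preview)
Your proposal is correct and follows essentially the same approach as the paper: iterate the three rewrites (local complementation for interior proper Clifford, pivot for adjacent interior Pauli pairs, and unfuse-then-pivot for an interior Pauli adjacent to a boundary), and argue termination by a decreasing count of interior Clifford spiders. The paper phrases the termination measure slightly differently---it marks the interior Clifford spiders once at the outset and observes that each step removes at least one marked node---but this is equivalent to your direct verification that none of the phase shifts can create a new interior Clifford spider; the paper also explicitly interleaves the parallel-edge/self-loop cleanup~\eqref{eq:parallel-edges-loops} after each step, which addresses exactly the boundary bookkeeping you flag at the end.
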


\begin{proof}
Starting with a graph-like \zxdiagram, mark every interior Clifford spider for simplification. Then repeat the steps below (followed by \eqref{eq:parallel-edges-loops} to remove parallel edges), until no rule matches:
\begin{itemize}
\item Apply \eqref{eq:lc-simp} to remove a marked proper Clifford spider.
\item Apply \eqref{eq:pivot-simp} to remove an adjacent pair of marked Pauli spiders.
\item Apply \eqref{eq:pivot-unfuse} to a boundary adjacent to a marked Pauli spider and immediately apply~\eqref{eq:pivot-simp}.
\end{itemize}
All 3 rules remove at least 1 marked node, so the procedure terminates. Since none of these rules will introduce non-Clifford spiders, if $D$ contains only Clifford spiders so too does $D'$. Since the only possibilities for interior spiders in $D'$ are covered by cases 1-3, $D'$ contains no interior spiders.
\end{proof}

\begin{corollary}\label{cor:simp-preserves-gflow}
For $D$ and $D'$ the \zxdiagrams from Theorem \ref{thm:simp}, if $\UG D$ admits a focused gFlow, then so too does $\UG{D'}$.
\end{corollary}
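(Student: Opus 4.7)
The strategy is to trace through the procedure of Theorem~\ref{thm:simp} step by step and check that focused gFlow of the underlying open graph is preserved at each iteration. Applications of \eqref{eq:lc-simp} act on $\UG{D}$ by performing a local complementation at an interior vertex $u$ and then deleting $u$, which is exactly case~(1) of Theorem~\ref{thm:gflow-preserve}. Similarly, applications of \eqref{eq:pivot-simp} to a pair of adjacent interior Pauli spiders $u,v$ act on $\UG{D}$ by pivoting along the edge $uv$ and deleting both endpoints, which is case~(2) of Theorem~\ref{thm:gflow-preserve}. So these purely-interior steps preserve focused gFlow by direct appeal to the theorem.

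The only remaining case is the combined step of applying \eqref{eq:pivot-unfuse} to a boundary spider adjacent to a marked Pauli spider, followed immediately by \eqref{eq:pivot-simp}. The first part, \eqref{eq:pivot-unfuse}, enlarges the open graph by introducing a fresh phaseless vertex $w$ that is Hadamard-connected to the former boundary spider $b$ and that takes over the role of output-connected vertex, so in the new open graph $G^+$ we have $b \in \comp O$ and $w \in O$. The plan is to extend a given focused gFlow $(g,\prec)$ on $\UG{D}$ to a focused gFlow $(g',\prec')$ on $G^+$ as follows: set $g'(b) := \{w\}$; for every vertex $u$ with $b \in \odd{\UG{D}}{g(u)}$, set $g'(u) := g(u) \cup \{w\}$, and otherwise leave $g'(u) := g(u)$; finally extend $\prec$ by placing $w$ above $b$ and above every vertex with $w$ now in its correction set. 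A short case analysis then shows that both conditions of Definition~\ref{def:gflow} still hold: since the only neighbour of $w$ in $G^+$ is $b$, adding $w$ to $g'(u)$ flips precisely the parity of $|N_{G^+}(b) \cap g'(u)|$ and nothing else, thereby cancelling the potentially-offending element $b \in \odd{G^+}{g'(u)}\cap \comp O$ without changing the odd neighbourhood at any other non-output vertex; and $\odd{G^+}{\{w\}} \cap \comp O = \{b\}$ verifies condition~(1) for $b$ itself.

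Once focused gFlow is established on $G^+$, the subsequent pivot step removes the adjacent interior Pauli pair (one of which is the now-interior $b$) and is again case~(2) of Theorem~\ref{thm:gflow-preserve}, so focused gFlow survives. Iterating over all steps of the procedure in Theorem~\ref{thm:simp} then yields the corollary. The main obstacle is the bookkeeping during the boundary-unfusing step: one must correctly identify how $b$'s role changes and verify that the singleton odd-neighbourhood condition is maintained at every non-output vertex after the edit. Once the correction ``add $w$ whenever $b$ was flipped'' is in place, the rest of the verification is mechanical.
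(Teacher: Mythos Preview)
Your argument follows the same skeleton as the paper's: the interior local-complementation and pivot steps are handled by the two cases of Theorem~\ref{thm:gflow-preserve}, and the boundary step \eqref{eq:pivot-unfuse} is handled separately as an extension of the open graph. The only structural difference is that the paper dispatches the boundary step by observing that \eqref{eq:pivot-unfuse} is an \emph{input/output extension} in the sense of~\cite{mhalla2011graph} and citing that reference for preservation of focused gFlow, whereas you supply an explicit construction of the extended flow. Your construction is correct for the case you treat and is a nice self-contained alternative to the citation.

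There is, however, a small completeness gap. A boundary spider may be connected to an \emph{input} rather than an output (this is why Theorem~\ref{thm:simp} says ``up to single-qubit unitaries on inputs/outputs''), and the paper's proof accordingly treats the extension for either $X\in\{I,O\}$. You only handle the output case: your recipe puts $w$ into correction sets, which requires $w\in\comp{I}$, and that fails when $w$ is the new input. The input case needs a different (dual) construction: there the old vertex $b$ was already in $\comp O$, so $g(b)$ exists; one checks that the original $g$ still satisfies the focused condition on $\comp{O'}=\comp O\cup\{w\}$ for every $u\in\comp O$ (using $b\notin g(u)$ since $b\in I$), and one defines $g'(w)$ by combining $\{b\}$ with $g(b)$ so that the odd neighbourhood at non-outputs collapses to $\{w\}$. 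Once you add this symmetric case, your proof is complete and arguably more informative than the paper's citation.
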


\begin{proof}
It was shown in Theorem~\ref{thm:gflow-preserve} that the transformations \eqref{eq:lc-simp} and \eqref{eq:pivot-simp} preserve focused gFlow. The transformation \eqref{eq:pivot-unfuse} acts as an \textit{input/output extension} on $\UG D$. That is, for $X \in \{ I, O \}$, $v \in X$, it adds a new vertex $w$ and an edge $vw$, and sets $X' := (X \backslash \{v\}) \cup \{w\}$. It was shown in \cite{mhalla2011graph} that these transformations preserve focused gFlow. 
\end{proof}

This simplification procedure is very effective, especially for circuits with few non-Clifford gates. We can see this by considering a randomly-generated circuit, where $\sim2\%$ of the gates are non-Clifford:
\[ \scalebox{0.85}{\tikzfig{big-example-4rows}} \]
This circuit has 195 gates, with 4 non-Clifford gates. If we apply the simplification procedure, we obtain a small `skeleton' of the circuit, containing 12 spiders:
\begin{equation}\label{eq:big-example-skeleton}
	\tikzfig{big-example-simp2}
\end{equation}
Note the non-Clifford phases now clearly appear on the 4 interior spiders. In general, the interior spiders will either be non-Clifford or be Pauli with only non-Clifford neighbours. Any other possibilities will be removed.

\begin{remark}\label{rem:simp-performance}
Let $n$ denote the amount of spiders in the diagram. As each step of Theorem~\ref{thm:simp} removes a spider, the amount of steps is upper-bounded by $n$. Each step toggles the connectivity of some subset of the vertices in the graph, and hence the elementary graph operations per step is upper-bounded by $n^2$. We see then that the full simplification procedure has worst-case complexity $O(n^3)$. However, since the graphs coming from circuits are far from being fully connected, we see a much better scaling in practice, with the complexity lying between $O(n)$ and $O(n^2)$ (see Ref.~\cite{pyzx} for more details). Our implementation simplifies the example above in $\sim$25ms on a laptop computer, scaling up to a few minutes for circuits with 10-20K gates.
\end{remark}

\section{Circuit extraction of Clifford circuits}\label{sec:clifford}

As described in the previous section, when a Clifford circuit is fed into our simplification routine, no interior spiders will be left, and hence the diagram is in \emph{GS-LC} form (\emph{graph-state with local Cliffords}). It is straightforward to extract a Clifford circuit from a GS-LC form. First, we unfuse spiders to create local Cliffords and CZ gates on the inputs and outputs, then apply the \HadamardRule rule to the spiders on the right side of the resulting bipartite graph:
\begin{equation}\label{eq:gs-lc}
\scalebox{1.0}{\tikzfig{gslc-extract1prime}}
\end{equation}
The part of the diagram marked $\mathcal P$ now has the form of a classical parity circuit. That is, it is a permutation of computational basis states where the output basis is given in terms of parities of the inputs, e.g.
$ \ket{x_1, x_2, x_3, x_4} \mapsto \ket{x_1 \oplus x_2, x_1 \oplus x_3, x_4, x_3}.$
Such a unitary can always be realised via CNOT gates, using standard techniques based on Gaussian elimination over $\mathbb F_2$ (see e.g.~\cite{markov2008optimal}). Hence, we obtain a Clifford circuit with 6 layers of gates:
\begin{equation*}
  \textrm{Local Clifford + CZ + CNOT + H + CZ + Local Clifford}
\end{equation*}

\noindent Using a strategy based on~\cite[Thm.~13]{Backens1}, we can apply the local complementation rule~\eqref{eq:gs-local-comp} to the LHS of \eqref{eq:gs-lc} to reduce all of the local Cliffords on the inputs to the set $\{S^n, H, ZH\}$ and the outputs to the set $\{S^n, H, HZ \}$ (cf. Appendix~\ref{sec:reduction-lc}). Hence, we can further refine the decomposition above into 8 layers:
\begin{equation}\label{eq:cliff-nf}
  \textrm{H + S + CZ + CNOT + H + CZ + S + H}
\end{equation}
where the S layer means powers of S gates (including $S^2 = Z$).

There are a variety of pseudo-normal forms for Clifford circuits in the literature, starting with the 11-layer form given by Aaronson and Gottesman~\cite{aaronsongottesman2004} and the courser-grained 5-layer form of Dehaene and De Moore~\cite{dehaene2003clifford}, which led to improved versions by Maslov and Roetteler~\cite{maslov2018shorter} and van den Nest~\cite{nest2010clifford}, respectively. While there are some superficial similarities between our normal forms and these earlier ones, there is at least one notable difference. All of the forms mentioned above require at least two distinct CNOT layers, but (with the exception of Ref.~\cite{aaronsongottesman2004}) require just a single later of Hadamard gates. On the other hand, our normal form has just a single CNOT layer, at the cost of multiple Hadamard layers. We will now see that this trade-off has some nice consequences.


In Ref.~\cite{maslov2018shorter} the authors argued that since there are $2^{2n^2 +O(n)}$ Clifford unitaries on $n$ qubits, one needs at least $2n^2 + O(n)$ Boolean degrees of freedom to specify all the $n$ qubit Clifford unitaries, and furthermore they found a normal form which has the same number of degrees of freedom and hence is asymptotically optimal in this sense. They also study the problem of finding a Clifford pseudo-normal form that has the lowest 2-qubit gate depth when restricted to a linear nearest neighbour architecture and find a different normal form that is bounded by a $14n-4$ 2-qubit gate depth. The decomposition~\eqref{eq:cliff-nf} improves on this bound, while at the same time also satisfying the $2n^2 + O(n)$ asymptotically optimal gate count.

\begin{proposition}\label{prop:cliff-gate-depth}
    The GS-LC pseudo-normal form on $n$ qubits has an asymptotically optimal number of degrees of freedom $2n^2 + O(n)$. Furthermore, any Clifford unitary in this normal form can be mapped to a linear nearest neighbour architecture with a 2-qubit gate depth of $9n-2$.
\end{proposition}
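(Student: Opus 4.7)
The plan is to handle the two claims separately, since they concern independent properties of the normal form~\eqref{eq:cliff-nf}.

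For the degree-of-freedom count, I would walk through the 8 layers of~\eqref{eq:cliff-nf} and tally the classical data needed to specify each. Each of the 3 Hadamard layers is specified by a Boolean string of length $n$ (is $H$ applied to qubit $i$ or not?), contributing $3n$ bits total. Each of the 2 S-layers is specified by a choice of $S^k$ for $k\in\{0,1,2,3\}$ per qubit, contributing $4n$ bits. Each of the 2 CZ-layers is specified by an undirected graph on $n$ vertices, contributing $2\binom{n}{2}=n(n-1)$ bits. The middle CNOT layer realises an arbitrary linear reversible map on $n$ bits, hence is parametrised by an element of $\mathrm{GL}_n(\mathbb{F}_2)$, of which there are $\prod_{k=0}^{n-1}(2^n-2^k)=2^{n^2-O(n)}$. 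Summing, the GS-LC form carries $n^2+n(n-1)+7n+O(n)=2n^2+O(n)$ bits of classical data, matching the information-theoretic lower bound $\log_2|\mathcal{C}_n|=2n^2+O(n)$ from~\cite{aaronsongottesman2004,maslov2018shorter}, so the form is asymptotically optimal.

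For the LNN depth bound, I would analyse each layer's 2-qubit depth separately. The three H-layers and two S-layers are purely single-qubit, contributing $0$ to the 2-qubit depth. For each CZ-layer, the task is to realise an arbitrary graph of CZ gates on a linear chain. Since CZs commute, this reduces to a scheduling problem; using an odd-even (brick-wall) schedule augmented with nearest-neighbour SWAPs to bring each pair together in turn, one can execute any CZ pattern in 2-qubit depth at most $2n-3$. For the single CNOT layer, I would invoke the LNN compilation of $\mathrm{GL}_n(\mathbb{F}_2)$ due to Kutin--Moulton--Smithline, which realises any linear reversible circuit in 2-qubit depth $5n+c$ for a small explicit constant~$c$. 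Adding, the total 2-qubit depth is $2(2n-3)+(5n+c)=9n-6+c$.

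The last mile is pinning down the constant so that the sum equals exactly $9n-2$. The plan here is to show that $c=4$ is achievable, either by quoting the exact bound from the LNN CNOT literature or by tightening the CZ scheduling bound through a direct inductive construction: peel off the last qubit by routing its incident CZs to one end of the chain in depth $n-1$, then recurse, which gives a tight $2n-3$ estimate and lets any leftover slack be absorbed at the boundary between the CNOT layer and the surrounding $H$ layers. I expect this boundary-constant bookkeeping to be the main obstacle; the top-level layered structure is clear, but matching the exact figure $9n-2$ requires careful tracking of edge effects (what happens at the ends of the chain, and whether terminal single-qubit gates of one LNN subroutine can be merged into the adjacent $H$/$S$ layer without increasing 2-qubit depth). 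The degree-of-freedom part is by comparison routine counting.
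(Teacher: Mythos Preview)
Your degree-of-freedom count is essentially the same as the paper's: both arguments tally the single-qubit layers as $O(n)$, the two CZ layers as $n^2/2$ each, and the CNOT layer as $n^2$, giving $2n^2+O(n)$ in total. That part is fine.

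The depth argument has a genuine gap, and it is not just a matter of bookkeeping constants. You treat the two CZ layers independently and assert that each can be realised on an LNN architecture in 2-qubit depth $2n-3$ while returning the qubits to their original order. This bound is not established by the brick-wall/odd--even routing you sketch: in an odd--even transposition network every pair of qubits becomes adjacent exactly once over $n$ rounds, but those rounds \emph{reverse} the qubit order, and each round consists of SWAPs (or CZ-augmented SWAPs) that already cost more than depth~1 in two-qubit gates. Undoing the reversal would roughly double the cost, so $2n-3$ per layer without permutation is not something you get for free.

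The idea you are missing is precisely the one the paper imports from Maslov and Roetteler~\cite{maslov2018shorter}: implement each CZ layer \emph{together with} a full reversal of the qubit order, which can be done in 2-qubit depth $2n+2$ (their Theorem~6). Because the normal form has \emph{two} CZ layers, the two reversals can be arranged to cancel, and their Corollary~7 shows the pair of CZ layers (even when separated by the intervening CNOT and H layers) costs only $4n-2$ in total. Combined with the $5n$ bound of Kutin--Moulton--Smithline for the single CNOT layer (with no additive constant), this gives $5n+4n-2=9n-2$ directly. So the exact constant is not recovered by tightening edge effects in your construction; it comes from exploiting the global structure of having two CZ layers whose accompanying reversals cancel.
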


\begin{proof}
    The argument follows closely the one given in Ref.~\cite{maslov2018shorter}, but for a different normal form. Our normal form has 5 layers of single qubit Clifford gates. The Hadamard layers each add at most $n$ gates, while the Z-phase layers add at most $3n$ gates, hence these layers only add a linear amount of degrees of freedom. Each CZ layer adds $n^2/2$ degrees of freedom, while a CNOT layer adds $n^2$ degrees of freedom~\cite[Section I]{maslov2018shorter}. Hence the total degrees of freedom is given by $3n+2 \cdot 3n + 2 \cdot n^2/2 + n^2 = 2n^2 + O(n)$.

    For the 2-qubit gate depth, we note that any CNOT circuit can be implemented on a linear nearest neighbour architecture in depth $5n$~\cite{kutin2007computation}. A CZ circuit followed or preceded by a series of SWAP gates that reverses the qubit order can be implemented in depth $2n+2$ on a linear nearest neighbour architecture~\cite[Thm.~6]{maslov2018shorter}. But by~\cite[Cor. 7]{maslov2018shorter}, when we have two of these CZ circuits, possibly separated by some other gates, then this pair of CZ circuits can be implemented in 2-qubit gate depth $4n-2$. As the only layers in our pseudo-normal form that contribute to the 2-qubit gate depth are two CZ layers and a CNOT layer we then indeed have a total depth of $5n + 4n-2 = 9n - 2$.
\end{proof}


\section{Circuit extraction for general circuits}\label{sec:cliffordT}

If a Clifford+T circuit -- or a more general circuit containing gates $Z_\alpha$ or $X_\alpha$ for $\alpha \neq k\frac\pi2$ -- is fed into our simplification routine, there might still be interior spiders left, and hence the diagram will not be in GS-LC form. 
Our general strategy for extracting a circuit from a diagram produced by our simplification routine is to progress through the diagram from right-to-left, consuming vertices to the left and introducing quantum gates on the right. For an $n$-qubit circuit, we maintain a layer of $n$ vertices called the \textit{frontier} between these two parts. The goal is therefore to progress the frontier from the outputs all the way to the inputs, at which point we are left with an extracted circuit. We will see that the existence of a focused gFlow guarantees we can always progress the frontier leftward in this manner.

Much like in the Clifford case, our extraction procedure relies on applying (a variant of) Gaussian elimination on the adjacency matrix of a graph. Hence, we first establish a correspondence between primitive row operations and CNOT gates, which is proven in Appendix~\ref{sec:zx-reduction-rules}.

\begin{proposition}\label{prop:cnotgflow}
  For any \zxdiagram $D$, the following equation holds:
  \begin{equation*}
  \scalebox{1.0}{\tikzfig{cnot-pivot}}
  \end{equation*}
  where $M$ describes the biadjacency matrix of the relevant vertices, and $M^\prime$ is the matrix produced by starting with $M$ and then adding row 1 to row 2, taking sums modulo 2. Furthermore, if the diagram on the LHS has a focused gFlow, then so does the RHS.
\end{proposition}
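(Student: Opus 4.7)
The plan is to treat the proposition as two separate claims: first, the diagrammatic equality, which is a pure ZX-calculus computation; second, the focused gFlow preservation, which is a graph-theoretic consequence of the row operation on the biadjacency matrix.

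For the equality, I would start from the right-hand side and push the CNOT leftward through the frontier using the basic ZX rewrites. First, expand the CNOT in its ZX form -- a phase-$0$ green spider (control) wired to a phase-$0$ red spider (target) -- and fuse the green control into the frontier spider $u_1$ via \SpiderRule. Second, apply \HadamardRule to the red target, converting it into a green spider $v$ with a Hadamard on each of its three wires; the two wires previously connecting it to $u_1$ and $u_2$ thereby become Hadamard edges, while the wire to the output $o_2$ is left carrying a single Hadamard. Third, eliminate the auxiliary spider $v$: this is carried out with a combination of strong complementarity \StrongCompRule together with further applications of \SpiderRule and \HCancelRule, which in effect ``copy'' $u_1$'s Hadamard-edge neighbourhood onto $u_2$ through $v$ and then reabsorb $v$ into the output wire. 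The net effect on the biadjacency matrix is that row $1$ is added to row $2$, as required. This mirrors the well-known graph-state fact that a CNOT on two non-adjacent output qubits toggles the target's neighbourhood by the control's.

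For gFlow preservation, let $(g,\prec)$ be a focused gFlow on the underlying open graph $G$ of the LHS. The plan is to construct a focused gFlow $(g',\prec)$ on the RHS graph $G'$ with the same partial order and correction sets
\[ g'(v) \;:=\; \begin{cases} g(v) \mathbin{\triangle} \{u_1\} & \text{if } u_2 \in g(v), \\ g(v) & \text{otherwise.} \end{cases} \]
The only edges differing between $G$ and $G'$ are those joining $u_2$ to vertices in $N_G(u_1)$, so for any $w \in \comp O$ we get $\mathbf{1}[u_2 \in N_{G'}(w)] \equiv \mathbf{1}[u_2 \in N_G(w)] + \mathbf{1}[w \in N_G(u_1)] \pmod{2}$. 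Propagating this through the intersection with $g(v)$ introduces an extra term $\mathbf{1}[u_2 \in g(v)] \cdot \mathbf{1}[w \in N_G(u_1)]$, which is exactly cancelled by the contribution of toggling $u_1$ in $g'(v)$ (since $u_1$'s neighbourhood in $G'$ agrees with that in $G$). Hence $\odd{G'}{g'(v)}\cap\comp O = \odd{G}{g(v)}\cap\comp O = \{v\}$, so condition~(1) holds. Condition~(2) is immediate because the only element ever added to a correction set is $u_1 \in O$, which is maximal in $\prec$, so $v\prec u_1$ automatically; and $g'(v) \subseteq \comp I$ since $u_1$ is an output spider, hence not an input.

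The main obstacle is the diagrammatic step: once \HadamardRule is applied to the red target, the intermediate diagram leaves graph-like form, and the precise sequence of fusions and bialgebra moves needed to reabsorb the auxiliary spider $v$ while correctly producing the XOR of rows must be carried out carefully. The gFlow verification, by contrast, is essentially a mechanical mod-$2$ bookkeeping once one has identified the right definition of $g'$.
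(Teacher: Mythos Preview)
Your approach is correct in outline but differs genuinely from the paper's on both halves.

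For the equality, the paper does not push a single CNOT leftward through \StrongCompRule. Instead it appends \emph{two} CNOTs to the LHS---these compose to the identity, so equality is trivial---then converts the result to graph-like form and applies the pivot rule~\eqref{eq:pivot-simp} to the adjacent pair of interior Pauli spiders this creates. A direct check of the pivot's effect on the connectivity shows the biadjacency becomes $M'$, with one CNOT left over on the output. The advantage is that both the equality and the gFlow preservation are inherited from lemmas already in hand: Lemma~\ref{lem:pivot-simp} supplies the rewrite, and Corollary~\ref{cor:simp-preserves-gflow} (pivoting preserves focused gFlow) together with the fact that output extension preserves gFlow handles the second claim. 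Your bialgebra route is workable, but as you yourself note the reabsorption of the auxiliary spider is the delicate step and you have left it at the level of a sketch; the paper's trick sidesteps this entirely.

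Your direct construction $g'(v) := g(v)\symd\{u_1\}$ when $u_2\in g(v)$ is a valid and more elementary alternative for the gFlow claim, and the mod-$2$ cancellation you describe is exactly right. Two small caveats. First, ``$u_1\in O$ is maximal in $\prec$'' is not part of Definition~\ref{def:gflow}; it can be arranged without loss of generality by coarsening the order, but this should be said. Second, the $G'$ you analyse has the same vertex set as $G$, whereas the literal RHS still carries a CNOT whose graph-like form contributes extra vertices. What you have actually shown is that the residual diagram with biadjacency $M'$ (after the CNOT is pushed onto the extracted circuit) admits a focused gFlow---which is precisely what the extraction procedure needs, but is not quite the statement as written.
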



We now describe our extraction procedure at a high level. To aid in understanding this procedure, we provide a fully worked-out example in Appendix~\ref{sec:example-derivation} and a pseudocode description in Appendix~\ref{sec:algorithms}. To start, let the frontier $F$ consist of all the output vertices. If a vertex of the frontier has a non-zero phase, unfuse this phase in the direction of the output. Whenever two vertices of the frontier are connected, unfuse this connection into a CZ gate. If the diagram started out having a gFlow, then the resulting diagram still has a gFlow since we only affected output vertices.
Let $g$ be the gFlow of the diagram. We push the frontier onto new vertices in the following way:
\begin{enumerate}
  \item Find an unextracted vertex $v$ such that $g(v)\subseteq F$. Such a vertex always exists: take for instance any non-frontier vertex that is maximal in the gFlow order.
  \item We know that Odd$(g(v)) = \{v\}$ hence we can pick a designated frontier vertex $w\in g(v)$ such that $w$ is connected to $v$.
  \item Write the biadjency matrix of $g(v)\subseteq F$ and their neighbours to the left of the frontier (`the past'). Because Odd$(g(v)) = \{v\}$ we know that the sum of all the rows in this matrix modulo 2 is an all-zero vector except for a 1 corresponding to $v$. Hence, if we apply row operations with $w$ as the target from every other vertex in $g(v)$ we know that $w$ will have only a single vertex connected to it in its past, namely $v$. Do these row operations using Proposition~\ref{prop:cnotgflow}. As a result, \CNOT gates will be pushed onto the extracted circuit. By that same proposition, the diagram will still have a gFlow.
  \item At this point it might be that $v$ is still connected to vertices of $F$ that are not $w$. For each of these vertices $w^\prime \in F$, do a row operation from $w$ to $w^\prime$ by adding the corresponding \CNOT to the extracted diagram in order to disconnect $w^\prime$ from $v$. At this point $v$'s only connection to $F$ is via $w$. We can therefore conclude that the gFlow of the resulting diagram is such that the only vertex $u$ that has $w\in g(u)$ is $v$.
  \item The vertex $w$ has arity 2 and since it is phaseless we can remove it from the diagram by~\IdentityRule. Remove $w$ from the frontier, and replace it by $v$. By the remarks in the previous point the resulting diagram still has a gFlow. Unfuse the phase of $v$ as a phase gate to the extracted part.
  \item If there are still unextracted vertices, go back to step 1, otherwise we are done.
\end{enumerate}
After all vertices have been extracted by the above procedure, we will have some permutation of the frontier vertices to the inputs. The extraction is finished by realising this permutation as a series of SWAP gates.


\begin{remark}
	An interesting thing to note is we do not actually need a pre-calculated focused gFlow for this algorithm: knowing that one exists is enough. In step 3, it suffices to apply primitive row operations until a row with a single 1 appears in the biadjacency matrix. If this is possible by \textit{some} set of row operations (i.e. if there exists a focused gFlow), such a row will always appear in the reduced echelon form, so we can simply do Gauss-Jordan elimination. This is how we describe this step in the pseudocode presented in Appendix~\ref{sec:algorithms}.
\end{remark}

\begin{remark}\label{rem:extract-performance}
    The cost of the implementation of the extraction algorithm is dominated by the Gauss-Jordan elimination steps. The amount of rows in the matrices involved is equal to the qubit count $q$, while the amount of columns is equal to the amount of neighbours of the frontier, and is hence upper-bounded by the amount of spiders in the diagram $n$. As in the worst case $n$ elimination steps are needed and each elimination step takes $O(q^2n)$, this brings the complexity to $O(q^2n^2)$. As with simplification, we see much more favourable scaling in practice, due to sparsity of the graphs involved. For circuit extraction, our implementation gives similar performance to that described in Remark~\ref{rem:simp-performance}.
\end{remark}

Picking up our example from the previous section, we can apply this procedure to the skeleton of the circuit in equation~\eqref{eq:big-example-skeleton} and we obtain a new, extracted circuit:
\[
\scalebox{1.0}{\tikzfig{big-example-extract2}}
\]
The circuit has been reduced from 195 gates to 41 gates, going via an intermediate representation as a \zxdiagram with 12 spiders. As mentioned in the introduction, this example is available as a Jupyter notebook `\href{https://nbviewer.jupyter.org/github/Quantomatic/pyzx/blob/906f6db3/demos/example-gtsimp.ipynb}{\color{blue!80!black} \texttt{demos/example-gtsimp.ipynb}}' in the PyZX~\cite{pyzx} repository on GitHub.


\begin{figure}
    \centering
    \includegraphics[width=0.45\textwidth]{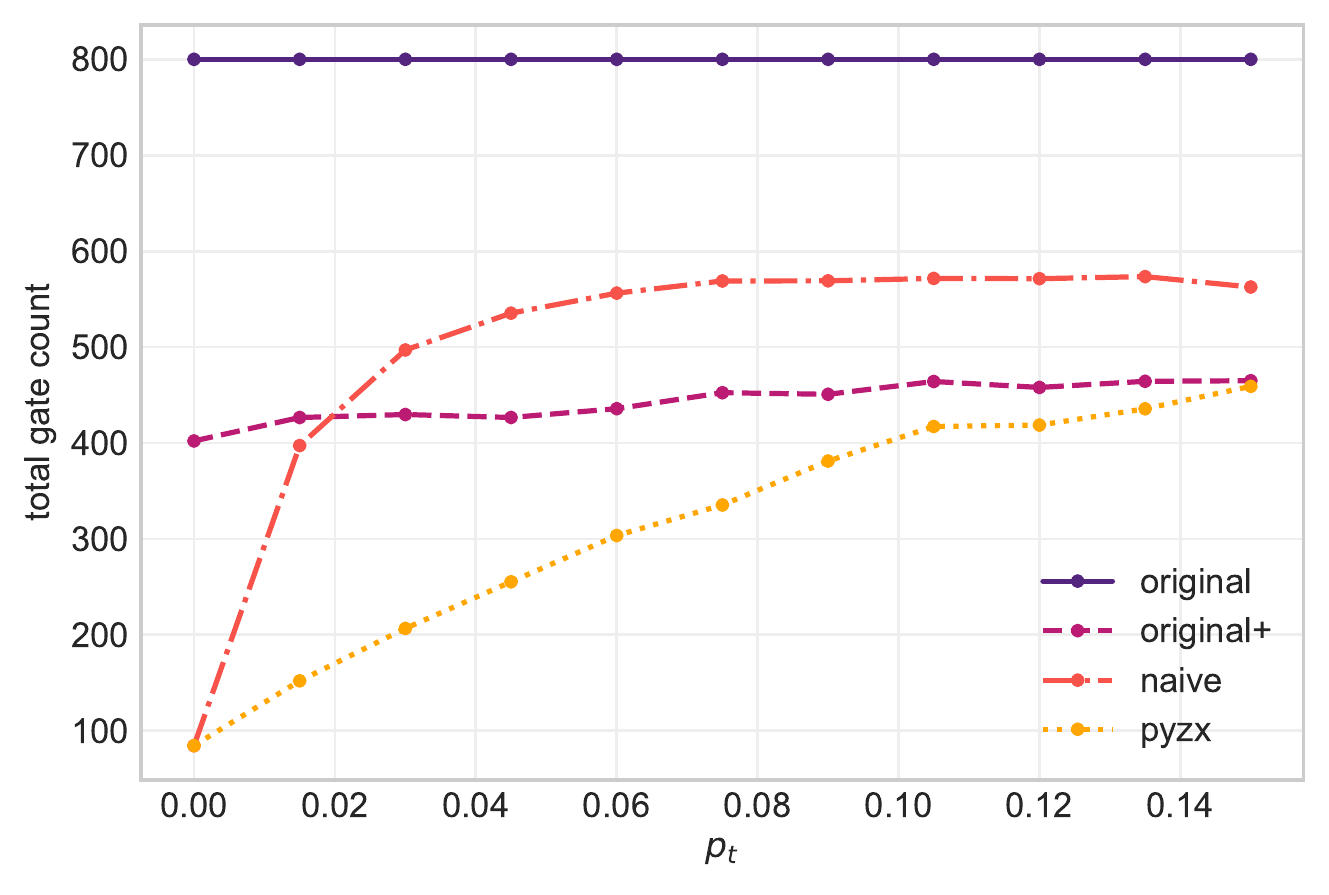}\ \ 
    \includegraphics[width=0.45\textwidth]{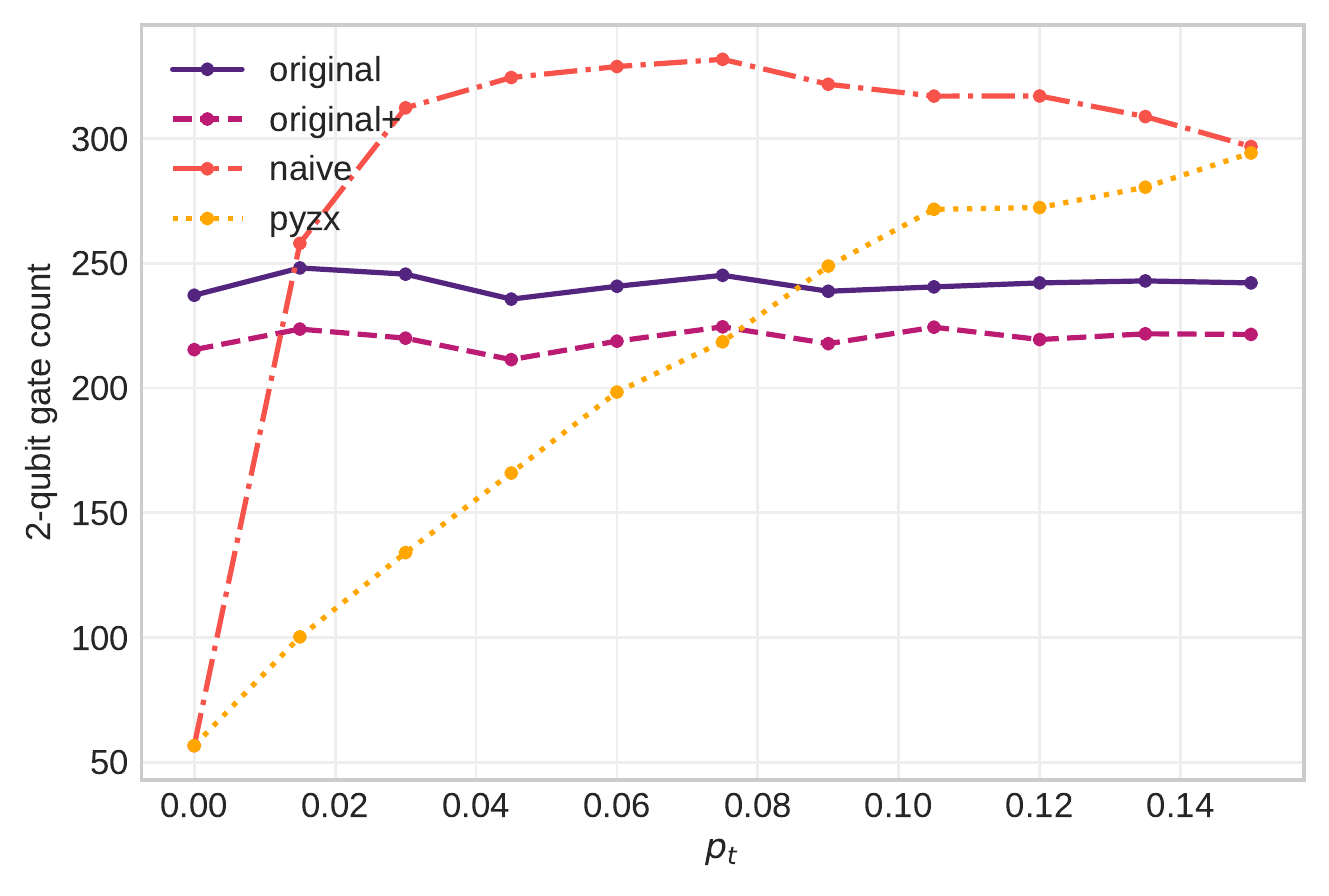}
    \caption{Comparison of different circuit optimization methods for random 8-qubit circuits with a varying proportion $p_t$ of $T$ gates. `original+' applies simple peephole optimisations (e.g. gate cancellations), `na\"ive' reduces each block of Clifford circuits to GS-LC form and re-synthesizes it, whereas `pyzx' is our full optimise-and-extract procedure.}\label{fig:experiment-gateopt}
\end{figure}

In Figure~\ref{fig:experiment-gateopt}, we give an empirical comparison between the full optimise-and-extract technique and the na\"ive approach of optimising all of the Clifford sub-circuits of a general Clifford+T circuit. This data was produced by generating random 8 qubit Clifford+T circuits with 800 gates where the probability that each gate is a CNOT gate is $0.3$ and we vary the probability $p_t$ of $T$ gates from $0$ to $0.15$. For each value we generate 20 circuits and we report the average total and 2-qubit gate counts for the original circuit (labelled \textsf{original} in Figure~\ref{fig:experiment-gateopt}) and 3 different reduced versions.

The first reduced version (\textsf{original+}) is the original circuit after some basic post-processing. This post-processing tries to cancel and combine as many adjacent gates as possible. It does this by applying a forward pass on the circuit, where simple 2-gate commutations such as $\textrm{CZ}(1 \otimes H) = (1 \otimes H)\textrm{CNOT}$ are used to delay the placement of Hadamard gates in the circuit as long as possible, while at the same time keeping track of a stack of commuting gates behind the delayed Hadamard gates that are available for combination with new incoming gates. After a pass, the circuit is reversed and the process is repeated until no more gate reductions occur. This post-processing is intended to eliminate obvious redundancies, rather than providing significant gate-based optimisations such as those in e.g.~\cite{nam2018automated}.

The interesting two cases in Figure~\ref{fig:experiment-gateopt} are \textsf{na\"ive} and \textsf{pyzx}. In the na\"ive case, we cut the circuit into alternating layers of Clifford circuits and T-gates. We then apply the Clifford normalisation and extraction described in Section~\ref{sec:clifford} to each Clifford chunk. For \textsf{pyzx}, the full Clifford+T simplification and extraction described in Section~\ref{sec:cliffordT} is applied. In both cases, the resulting circuit is post-processed as in \textsf{original+}. For the steps requiring Gaussian elimination, we use the asymptotically optimal algorithm proposed in Ref.~\cite{markov2008optimal}.

When the circuits are very close to Clifford, both of the second two optimisations perform very well, and the \textsf{pyzx} method outperforms the na\"ive method in every case. For the na\"ive method, as the probability of T gates increases, each Clifford chunk will have far fewer than quadratically many 2-qubit gates, in which case 
resynthesizing is likely actually increase the gate count. At high T gate probability, Figure~\ref{fig:experiment-gateopt} shows that the gate and 2-qubit gate count of the na\"ive method saturate. This is because the Clifford chunks are so small that no Gaussian elimination is used at all.
The \textsf{pyzx} case performs much better than the \textsf{na\"ive} case, as it can use more `non-local' structure that looks beyond the boundary of the Clifford subcircuits. Nevertheless, as the non-Clifford density increases, the \textsf{pyzx} routine also begins to increase the gate count. This is because we are still re-synthesising parts of the circuit using the procedure from Ref.~\cite{markov2008optimal}. While this is \textit{asymptotically} optimal, it is less beneficial for very small pieces of the circuit, and can even increase the gate count.


There are two general ways in which this method can be improved: better simplification, and better extraction. The simplification can be improved by including a larger selection of simplification steps. This is done in Ref.~\cite{zxtcount}, and allows the size of the graph to be reduced significantly in many cases. The circuit extraction can be improved in several ways. For instance, when doing Gaussian elimination, the order of the rows in the matrix is immaterial as it corresponds to an arbitrary labelling of interior spiders. A suitable heuristic, such as for instance the genetic algorithm of Ref.~\cite{kissinger2019cnot}, for choosing an order of columns can greatly improve the performance of the Gaussian elimination.

These experiments are available as a Jupyter Notebook in the PyZX repository\footnote{\href{https://nbviewer.jupyter.org/github/Quantomatic/pyzx/blob/671da79/demos/Optimising\%20almost-Clifford\%20circuits.ipynb}{\color{blue!80!black}\texttt{nbviewer.jupyter.org/github/Quantomatic/pyzx/blob/671da79/demos/Optimising almost-Clifford circuits.ipynb}}}.

\section{Conclusions and Future Work}\label{sec:conclusion}
We have introduced a terminating rewrite strategy for \zxdiagrams that simplifies Clifford circuits to GS-LC form using the graph-theoretic notions of local complementation and pivoting. We have shown how the diagrams produced by the rewrite strategy applied to non-Clifford circuits can be turned back into a circuit using the fact that our rewrites preserve focused gFlow. 

The question of how a circuit can be extracted from a general ZX-diagram is still open. We speculate that this problem is not tractable in the general case, as it seems to be related to the problem of finding an ancilla-free unitary from a circuit containing ancillae. This observation does however not preclude finding larger classes of rewrites and ZX-diagrams such that we can efficiently extract a circuit. 

The next step for optimising quantum circuits using \zxdiagrams is to find additional rewrite rules that allow one to produce smaller diagrams, while still being able to extract the diagrams into a circuit. As mentioned in the introduction, two of the authors have recently proposed an approach for T-count reduction based on the methods described here which has been very successful in the ancilla-free case~\cite{zxtcount}. As mentioned in Ref.~\cite{zxtcount}, even more dramatic simplifications of \zxdiagrams are possible, but re-extracting a circuit using gFlow becomes problematic. Developing more general circuit extraction procedures, possibly with the help of ancillae and classical control is therefore a topic of ongoing research.


A second avenue of future work is to adapt the circuit extraction procedure to work in constrained qubit topologies. Many near-term quantum devices only allow interactions between certain pairs of qubits (e.g. nearest neighbours in some fixed planar graph). A notable feature of the circuit extraction procedure described in Section~\ref{sec:cliffordT} is that we have some freedom in choosing which CNOTs to apply in reducing the bi-adjacency matrix of the ZX-diagram. Indeed it effectively amounts to performing Gaussian elimination using only a constrained set of primitive row operations. In Ref.~\cite{kissinger2019cnot} (and independently in \cite{nash2019quantum}) a strategy based on Steiner trees has been proposed for doing exactly that for CNOT or CNOT+Phase circuits. In principle, these methods are directly applicable to the extraction procedure from Section~\ref{sec:cliffordT}. However, unlike the simpler cases considered by \cite{kissinger2019cnot} and \cite{nash2019quantum}, our extraction procedure relies on many rounds of Gaussian elimination, so it will likely be necessary to use some sort of lookahead to minimise global overhead.



\medskip

\noindent {\small \textbf{Acknowledgements.} 
SP acknowledges support from the projects ANR-17-CE25-0009 SoftQPro, ANR-17-CE24-0035 VanQuTe, PIA-GDN/Quantex, and  LUE / UOQ. AK and JvdW are supported in part by AFOSR grant FA2386-18-1-4028. We would like to thank Quanlong Wang and Niel de Beaudrap for useful conversations about circuit optimisation and extraction with the \zxcalculus.}

\bibliographystyle{plainnat}
\bibliography{main}

\appendix

\section{Example derivation}
\label{sec:example-derivation}

In this section we will work through a complete example of our simplification algorithm.








\noindent Let us start with the following circuit:
\ctikzfig{compl-example1}
This has 5 two-qubit gates and 19 single-qubit gates. Using the procedure of Lemma~\ref{lem:all-zx-are-graph-like} to get it in graph-like form we get:
\ctikzfig{compl-example2}

We see that we have multiple interior Pauli and proper Clifford phases and hence we must do local complementations and pivots. We start by doing a sequence of local complementations. 
The locations of these operations are marked by a red star.
\ctikzfig{compl-example3}
We now have two internal Pauli vertices left, but they are not connected. Hence we must use equation \eqref{eq:pivot-unfuse}, to be able to get rid of them:
\ctikzfig{compl-example4}
Applying the pivots to the designated vertices gives us:
\ctikzfig{compl-example5}
Note that in the bottom qubit we end up where we started before doing the pivots, but now we have marked the $\pi/2$ vertex as a local Clifford, and hence the $\pi$ vertex now counts as a boundary vertex. This means the simplification procedure has ended. The final step is extracting a circuit:
\ctikzfig{compl-example6a}
Here the frontier is marked by the box. We first unfuse the phases onto gates. Then we want to extract new vertices. These are marked by a red star. Since the three marked vertices have no other connections to the frontier, and the frontier vertices have only these single neighbours, we can extract these by simply putting them onto the frontier. The next step is to extract the bottomleft vertex, but in this case there is a connection too many which we must extract as a \CNOT:
\ctikzfig{compl-example6b}
We now simply repeat this procedure until the entire circuit is extracted.
\ctikzfig{compl-example6c}

The extracted circuit has 4 two-qubit gates and 18 single-qubit gates. Applying some final gate simplifications on the inputs and outputs gets the number of single-qubit gates down to 11:
\ctikzfig{compl-example7b}
\vspace{0.3cm}





\section{Proofs}
\label{sec:proofs}

\subsection{Circuits and focused gFlow}
\label{sec:circuits-causal-flow}


To show that graph-like \zxdiagrams arising from circuits admit a focused gFlow, we introduce a simpler sufficient condition.

\begin{definition}{\cite{Danos2006Determinism-in-}}\label{def:causal-flow}
Given an open graph $G$, a
\emph{causal flow} $(f,\prec)$ on $G$  consists of a
function $f: \comp O \to \comp I$ and a partial order $\prec$ on the set
$V$ satisfying the following properties:
\begin{enumerate}
\item $f(v) \sim v$ \label{flowi}
\item $v \prec f(v)$ \label{flowii}
\item if $u \sim f(v)$ then $v \prec u$ \label{flowiii}
\end{enumerate}
where $v \sim u$ means that $u$ and $v$ are adjacent in the graph.
\end{definition}

Like focused gFlow, causal flow is a special case of a more general property called gFlow:

\begin{definition}
Given an open graph $G$, a \textit{generalised flow}, or gFlow consists of a pair $(g : \comp O \to 2^{\comp I},\prec)$ such that for any $u \in \comp O$:
\begin{enumerate}
\item $u \in \odd{G}{g(u)}$
\item if $v \in g(u)$, then $u \prec v$
\item if $v \in \odd{G}{g(u)}$ and $v \neq u$, then $u \prec v$
\end{enumerate}
\end{definition}

In the case where all of the sets $g(u)$ are singletons, the conditions above are equivalent to the causal flow conditions. Hence, if a graph admits a causal flow, it also admits a gFlow by letting $g(u) := \{ f(u) \}$. It was furthermore shown in \cite{mhalla2011graph} that any gFlow can be systematically transformed into a focused gFlow. Combining these two facts, we obtain the following theorem.

\begin{theorem}\label{thm:causal-implies-focused}
If an open graph admits a causal flow $(f,\prec)$, then it also admits a focused gFlow.
\end{theorem}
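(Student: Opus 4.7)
The plan is to exploit the two observations made immediately before the theorem statement, which reduce the proof to a straightforward verification followed by an appeal to a known result. First I would define a candidate $g : \comp O \to 2^{\comp I}$ from the given causal flow $f$ by the singleton assignment $g(u) := \{ f(u) \}$, and take the partial order to be the same $\prec$. I would then check that $(g, \prec)$ is a gFlow by matching the three gFlow axioms against the three causal flow axioms.

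Because each $g(u)$ is a singleton, $\odd G {g(u)} = N(f(u))$. Hence the first gFlow condition, $u \in \odd G {g(u)}$, becomes $u \sim f(u)$, which is exactly causal flow condition~\ref{flowi}. The second gFlow condition, $v \in g(u) \implies u \prec v$, collapses to $u \prec f(u)$, matching causal flow condition~\ref{flowii}. The third gFlow condition, that every $v \in \odd G {g(u)}$ distinct from $u$ satisfies $u \prec v$, is precisely the statement that every neighbour of $f(u)$ other than $u$ lies above $u$ in $\prec$, which is causal flow condition~\ref{flowiii}. Thus $(g, \prec)$ is a gFlow.

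To finish, I would invoke the result of \cite{mhalla2011graph} that any open graph admitting a gFlow also admits a focused gFlow (using the same $\prec$, or a refinement thereof). Composing the two steps delivers the focused gFlow from the causal flow, proving the theorem.

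The main ``obstacle'' is that the focusing step is used as a black box; a self-contained argument would have to reproduce the focusing construction of \cite{mhalla2011graph}, which iteratively replaces $g(u)$ by the symmetric difference $g(u) \mathbin{\Delta} g(v)$ for $v \in (\odd G {g(u)} \cap \comp O) \setminus \{u\}$ in order to shrink the odd neighbourhood down to $\{u\}$ while preserving the axioms. Since the theorem only asserts existence and the cited lemma is available, the short two-step proof above suffices.
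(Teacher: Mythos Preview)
Your proposal is correct and follows exactly the same two-step route as the paper: set $g(u):=\{f(u)\}$ to obtain a gFlow (which you verify more carefully than the paper does, spelling out the correspondence between the three axioms), and then invoke the focusing result from \cite{mhalla2011graph}. There is nothing to add.
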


The following proof of Lemma~\ref{lem:circuits-have-gflow} shows that the diagrams produced by making a circuit graph-like according to the procedure outlined in Lemma~\ref{lem:all-zx-are-graph-like} have causal flow and hence a focused gFlow.

\begin{proof}[Proof of Lemma \ref{lem:circuits-have-gflow}]
  Let $D$ denote the circuit, and let $D^\prime$ denote the diagram produced 
  by applying the procedure of Lemma~\ref{lem:all-zx-are-graph-like} to $D$.
  
  In order to prove that $D^\prime$ has a focused gFlow, it suffices to show that it admits a causal flow, by Theorem~\ref{thm:causal-implies-focused}.

  With every spider $v$ of the circuit $D$ we associate a number $q_v$ specifying on which `qubit-line' it appears.
  We also associate a `row-number' $r_v\geq 0$ specifying how `deep' in the circuit it appears.
  Suppose now that $v\sim w$ in $D$. If they are on the same qubit, so $q_v=q_w$, then necessarily $r_v\neq r_w$.
  Conversely, if they are on different qubits, $q_v\neq q_w$, then they must be part of a CZ or CNOT gate,
  and hence $r_v=r_w$.

  In the diagram resulting from Lemma~\ref{lem:all-zx-are-graph-like}, every spider arises from fusing together adjacent spiders on the same qubit line from the original diagram.
  For a spider $v$ in $D^\prime$ we can thus associate two numbers $s_v$, and $t_v$, 
  where $s_v$ is the lowest row-number of a spider fused into $v$, and $t_v$ is the highest.
  Spider fusion from $D$ only happens for spiders on the same qubit-line, and hence $v$ also inherits a $q_v$ from
  all the spiders that got fused into it. Any two spiders $v$ and $w$ in $D^\prime$ with $q_v=q_w$
  must have been produced by fusing together adjacent spiders on the same qubit-line,
  and hence we must have $t_v<s_w$ or $t_w<s_v$, depending on which of the spiders is most to the left.
  If instead $v\sim w$ and $q_v\neq q_w$, then their connection must have arisen from some CNOT or CZ gate in $D$, 
  and hence the intervals $[s_v,t_v]$ and $[s_w,t_w]$ must have overlap,
  so that necessarily $s_w\leq t_v$ and $s_v\leq t_w$.


  Now we let $O$ be the set of spiders connected to an output, 
  and $I$ the set of spiders connected to an input in $D^\prime$.
  For all $v\in \comp O$ we let $f(v)$ be the unique connected spider to the right of $v$ on the same qubit-line.
  We define the partial order as follows: $v\prec w$ if and only if $v=w$ or $t_v<t_w$. 
  It is straightforward to check that this is indeed a partial order.

  By construction $f(v)\sim v$ and the property $v \prec f(v)$, follows from $t_v < s_{f(v)}$ discussed above,
  so let us look at the third property of causal flow.

  Suppose $w\sim f(v)$. We need to show that $v\prec w$. If $v=w$ this is trivial so suppose $v\neq w$.
  First suppose that $q_w = q_{f(v)}$ (which is also equal to $q_v$). $f(v)$ has a maximum of two neighbours
  on the same qubit-line, and since one of them is $v$, this can only be if $w=f(f(v))$ and hence $v\prec f(v)\prec w$,
  and we are done.
  So suppose that $q_w\neq q_{f(v)}$. 
  By the discussion above, we must then have $s_w\leq t_{f(v)}$ and $s_{f(v)} \leq t_w$.
  Since we also have $t_v < s_{f(v)}$ we get $t_v < s_{f(v)} \leq t_w$ so that indeed $v\prec w$.
\end{proof}

\subsection{Preservation of focused gFlow}
\label{sec:pres-focus-gflow}

Throughout this section, we will rely extensively on the symmetric difference of sets: ${A\symd B} := (A\cup B)\setminus (A\cap B)$. Note that $\symd$ is associative and commutative, so it extends to an $n$-ary operation in the obvious way. For $I:={1,\ldots,n}$ we have:
\[\Symdi{i \in I} A_i := A_1 \symd A_2 \symd \ldots \symd A_n \]
In particular, we have $a \in \Symdi{i \in I} A_i$ if and only if $a$ appears in an odd number of sets $A_i$. By convention, we assume $\Symdi{...}$ binds as far to the right as possible, i.e.
\[ \left( \Symdi{i \in I} A_i \symd B \right) := \left( \Symdi{i \in I} (A_i \symd B) \right) \]

\subsubsection{Local complementation}
\label{sec:proofLC}

The following lemma will be needed in the other proofs, it shows how the odd neighbourhood of a set evolves under local complementation:
\begin{lemma}\label{lem:oddneighbours}
Given a graph $G=(V,E)$, $A\subseteq V$ and $u \in V$, $$\odd {G\star u} A=\begin{cases}\odd G A \symd (N_G(u)\cap A) & \text{if $u\notin \odd G A$}\\ \odd G {A} \symd (N_G(u)\setminus A)&\text{if $u\in \odd G A$}\end{cases}$$
\end{lemma}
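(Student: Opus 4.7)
The proof is a direct parity calculation at each vertex. I first observe how the neighbourhood of an arbitrary vertex $v$ changes under the local complementation, then intersect with $A$ and count mod $2$.

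The key computation is that, for $v \in N_G(u)$, local complementation toggles the adjacency of $v$ to every other vertex of $N_G(u)$, so
\[ N_{G\star u}(v) \;=\; N_G(v) \symd \bigl(N_G(u)\setminus\{v\}\bigr), \]
while for $v = u$ or $v \notin N_G(u) \cup \{u\}$ we have $N_{G\star u}(v)=N_G(v)$. Intersecting with $A$ and using $|X \symd Y| \equiv |X| + |Y| \pmod 2$ for disjoint-after-cancellation symmetric difference, the only nontrivial case $v \in N_G(u)$ gives
\[ |N_{G\star u}(v)\cap A| \;\equiv\; |N_G(v)\cap A| \;+\; |N_G(u)\cap A| \;+\; [v\in A] \pmod 2, \]
where the last term comes from $|(N_G(u)\setminus\{v\})\cap A| \equiv |N_G(u)\cap A| + [v\in A] \pmod 2$.

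From here the two cases of the lemma follow by reading off whether $v \in \odd{G\star u}{A}$ is toggled relative to $v \in \odd{G}{A}$. When $u \notin \odd{G}{A}$, i.e.\ $|N_G(u)\cap A|$ is even, the parity changes iff $v \in N_G(u)$ and $[v\in A]=1$, i.e.\ iff $v \in N_G(u)\cap A$; this gives $\odd{G\star u}{A}=\odd{G}{A}\symd(N_G(u)\cap A)$. When $u\in\odd{G}{A}$, i.e.\ $|N_G(u)\cap A|$ is odd, the parity changes iff $v\in N_G(u)$ and $[v\in A]=0$, i.e.\ iff $v\in N_G(u)\setminus A$; this gives $\odd{G\star u}{A}=\odd{G}{A}\symd(N_G(u)\setminus A)$. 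The cases $v=u$ and $v\notin N_G(u)\cup\{u\}$ need to be checked to not contribute spurious toggles, which is immediate since their neighbourhoods are unchanged and $u\notin N_G(u)$ excludes $u$ from both correction sets.

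The only mild subtlety, and the step I would write out most carefully, is the bookkeeping of the indicator $[v\in A]$ arising from removing $v$ from $N_G(u)$ before intersecting with $A$; everything else is a routine parity computation and a clean case split on the parity of $|N_G(u)\cap A|$.
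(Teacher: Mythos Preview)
Your proof is correct and rests on the same key observation as the paper's: the formula $N_{G\star u}(v) = N_G(v)\symd(N_G(u)\setminus\{v\})$ for $v\in N_G(u)$ (equivalently $N_G(v)\symd N_G(u)\symd\{v\}$), together with the case split on the parity of $|N_G(u)\cap A|$. The only difference is organisational: the paper computes $\odd{G\star u}{A}$ globally as $\Symdi{v\in A} N_{G\star u}(v)$ and simplifies the resulting symmetric-difference expression, whereas you work pointwise, fixing an arbitrary vertex $v$ and tracking whether the parity of $|N(v)\cap A|$ flips. These are dual bookkeeping strategies (summing over rows versus columns of the adjacency relation) and yield the same computation; your pointwise version is arguably a little more transparent for checking the boundary cases $v=u$ and $v\notin N_G(u)$, while the paper's algebraic version scales more cleanly to the pivoting lemma that follows.
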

\begin{proof}
First notice that $\odd G .$ is linear: 
$\forall A, B$, $\odd G {A\symd B} = \odd G A \symd \odd G B$. 
Moreover $\forall v, \odd G {\{v\}} = N_G(v)$, the neighbourhood of $v$ in $G$. 
As a consequence, $\forall A, \odd G A  =\Symdi{v\in A} N_G (v)$. 

The local complementation is acting as follows on the neighbourhoods: 
$$\forall v, N_{G\star u} (v) = \begin{cases} N_G(v) \symd N_G(u)\symd \{v\} &\text{if $v\in N_G(u)$}\\ N_G(v) &\text{otherwise} \end{cases}$$
As a consequence,
\begin{eqnarray*}
  \odd {G\star u} A&=&\Symdi{v\in A} N_{G\star u} (v)\\
  &=& \left(\Symdi{v\in A\cap N_G(u)} N_{G} (v)\symd N_G(u)\symd \{v\}\right) \symd  \left(\Symdi{v\in A\setminus  N_G(u)} N_{G} (v)\right)\\
  &=&\left(\Symdi{v\in A} N_{G} (v)\right) \symd \left(\Symdi{v\in A\cap N_G(u)}  N_G(u) \right) \symd  \left(\Symdi{v\in A\cap N_G(u)}  \{v\}\right)\\
  &=&\odd G A \symd \left(\Symdi{v\in A\cap N_G(u)}  N_G(u) \right) \symd (A\cap N_G(u))
\end{eqnarray*}
Notice that $|A\cap N_G(u)| \equiv 1 \bmod 2$ iff $u\in \odd G A$. Hence, if $u\notin \odd G A$, 
$
   \odd {G\star u} A  =\odd G A  \symd (A\cap N_G(u))
$.  Otherwise, if $u \in \odd G A$, 
$    \odd {G\star u} A  =\odd G A \symd N_G(u)\symd (A\cap N_G(u)) =  \odd G A\symd (N_G(u)\setminus A)
$.%
\end{proof}

\begin{lemma}\label{lem:gflow-lcomp} 
If $(g,\prec)$ is a focused gFlow for $(G,I,O)$ then $(g',\prec)$ is a focused gFlow for $(G\star u\setminus \{u\},I,O)$ where $g':\comp O\setminus \{u\} \to 2^{\comp I\setminus \{u\}}$ is recursively defined as 
\[
g'(w):=
\begin{cases}
  g(w) \symd \left( \Symdi{t\in R_w} g'(t) \right) & \text{if $u \notin  g(w)$}\\
  g(w) \symd \{u\}\symd g(u) \symd \left(\Symdi{t\in R_u\symd R_w} g'(t) \right) & \text{if $u \in g(w)$}
\end{cases}
\]
where $R_w:=N_G(u)\cap g(w)\cap \comp O$.
\end{lemma}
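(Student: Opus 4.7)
The plan is to verify the two defining conditions of focused gFlow for $(g',\prec)$ on $(G \star u \setminus \{u\}, I, O)$ by strong induction on $w$ with respect to the reverse of $\prec$.

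First, I would confirm that the recursion defining $g'$ is well-founded: for $t \in R_w$, membership $t \in g(w)$ forces $w \prec t$; when $u \in g(w)$ we also have $w \prec u$, so for $t \in R_u \subseteq g(u)$ we get $u \prec t$ and hence $w \prec t$. Thus $g'(w)$ only invokes $g'(t)$ for $t \succ w$. The set-membership bookkeeping then gives $g'(w) \subseteq \comp I \setminus \{u\}$: in the case $u \notin g(w)$ no component contains $u$, while in the case $u \in g(w)$ the single copy of $u$ from $g(w)$ cancels the explicit $\{u\}$, since $u \notin g(u)$ by the order condition and $u \notin g'(t)$ by induction. The order condition $\forall v \in g'(w),\, w \prec v$ is immediate from the observation above that every set appearing in the recursion contains only vertices strictly above $w$.

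The main task is the odd-neighbourhood condition $\odd{G'}{g'(w)} \cap \comp O = \{w\}$, where $G' := G \star u \setminus \{u\}$. Two structural observations drive the calculation: (i) because $u \notin g'(w)$, one has $\odd{G'}{g'(w)} = \odd{G\star u}{g'(w)} \setminus \{u\}$; and (ii) $\odd{G}{\cdot}$ is linear with respect to symmetric difference (as established inside the proof of Lemma~\ref{lem:oddneighbours}). I would distribute the odd-neighbourhood operation across the components of the symmetric difference defining $g'(w)$, and to each component $A$ apply Lemma~\ref{lem:oddneighbours} to rewrite $\odd{G\star u}{A}$ as $\odd{G}{A}$ plus a correction term involving $N_G(u)$.

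In the case $u \notin g(w)$, this yields three contributions after intersecting with $\comp O$: the focused gFlow condition gives $\odd G {g(w)} \cap \comp O = \{w\}$; Lemma~\ref{lem:oddneighbours} contributes a correction $N_G(u) \cap g(w) \cap \comp O = R_w$ (using that $u \notin \odd G {g(w)}$ because $w \neq u$); and the inductive hypothesis gives $\Symdi{t \in R_w} \odd{G'}{g'(t)} \cap \comp O = \Symdi{t \in R_w} \{t\} = R_w$. The two copies of $R_w$ cancel and only $\{w\}$ remains. The case $u \in g(w)$ is analogous but also involves contributions from $\{u\}$ and $g(u)$: one uses $\odd G {g(u)} \cap \comp O = \{u\}$ with $u \in \odd G {g(u)}$ (since $u \in \comp O$), which triggers the second branch of Lemma~\ref{lem:oddneighbours} and produces a correction $N_G(u) \cap \comp O \setminus R_u$. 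A careful application of the identity $A \setminus B = A \symd (A \cap B)$ for $B \subseteq A$ shows that all of $N_G(u) \cap \comp O$, $R_u$, and $R_w$ cancel in pairs, leaving precisely $\{w\}$.

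The main obstacle will be keeping the symmetric-difference bookkeeping straight in the case $u \in g(w)$, in particular correctly identifying which of the two branches of Lemma~\ref{lem:oddneighbours} applies to each component (the branch depends on whether $u$ lies in the relevant $\odd G {\cdot}$, which flips between $g(w)$, $g(u)$, and $\{u\}$), and accounting for the $\setminus\{u\}$ in observation (i) so that the correction terms truly combine into a clean cancellation.
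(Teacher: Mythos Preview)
Your proposal is correct and follows essentially the same route as the paper: well-foundedness of the recursion, strong induction down the order, linearity of the odd-neighbourhood, Lemma~\ref{lem:oddneighbours} applied componentwise, and cancellation of the $R_w$ (and $R_u$) terms. The paper streamlines the bookkeeping you flag as the main obstacle by setting $O' := O \cup \{u\}$ and proving the equivalent statement $\odd{G\star u}{g'(w)} \cap \comp{O'} = \{w\}$ directly, which absorbs your observation~(i) into the intersection and spares you from tracking stray copies of $u$; it also handles $\{u\}\symd g(u)$ as a single block, obtaining $\odd{G\star u}{\{u\}\symd g(u)} \cap \comp{O'} = R_u$ in one shot rather than combining the two pieces afterwards.
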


\begin{proof}

 
First of all, $g'$ can indeed be defined inductively by starting at the maximal elements in the partial order, since for any $t\in R_w$, $t\in g(w)$ so that $w\prec t$, and similarly when $u\in g(w)$, $w\prec u$ so $t\in R_u$ implies $u\prec t$ hence $w\prec t$. That $g'$ preserves the order, i.e.\ $v\in g'(w)\implies w\prec v$, follows similarly. Moreover, it is easy to show that $\forall w\in \comp O\setminus \{u\}$, $g'(w)\subseteq \comp I\setminus \{u\}$. 

It remains to show that $\forall w\in \comp O\backslash \{u\}, \odd {G\star u \backslash u} {g'(w)} \cap \comp O = \{w\}$, or equivalently, letting $O':=O\cup \{u\}$, $\forall w\in \comp{O'}, \odd {G\star u} {g'(w)} \cap \comp{O'} = \{w\}$.

First notice that  for any $w\in \comp{O'}$, $u\notin \odd G{g(w)}$ (because $\odd G{g(w)}\subseteq \{w\} \cup \comp{O}$), and hence using Lemma \ref{lem:oddneighbours}, the linearity of $\textup{Odd}(.)$ 
and the distributivity of $\cap$ over $\symd$, 
\begin{eqnarray*} \odd {G\star u}{g(w)} \cap \comp{O'}&=& \odd {G}{g(w)}\cap \comp{O'}\symd (N_G(u)\cap g(w) \cap \comp{O'})\\
&=&\{w\}\symd R_w\\
\end{eqnarray*}
Moreover, $u\in \odd G{g(u)}$, so
\begin{eqnarray*}
\odd {G\star u}{ \{u\}\symd g(u)}\cap \comp{O'} &=&(N_{G\star u}(u)\symd \odd {G\star u}{ g(u)})\cap \comp{O'}\\
\text{(Lemma \ref{lem:oddneighbours})} &=& (N_G(u)\symd \odd G {g(u)} \symd (N_G(u)\setminus g(u)))\cap \comp{O'}\\
 &=& (\odd G {g(u)}\cap \comp{O'}) \symd (N_G(u)\cap g(u) \cap \comp{O'})\\
&=&N_G(u)\cap g(u) \cap \comp{O'}\\
&=&R_u
\end{eqnarray*}

As a consequence, again using distributivity of $\symd$ over the odd neighbourhood and the definition of $g'$:
$$\odd{G\star u}{g'(w)}\cap \comp{O'}=\begin{cases}\{w\}\symd R_w \symd \left(\Symdi{t\in R_w}\odd {G\star u}{g'(t)} \cap \comp{O'} \right) & \text{if $u\notin  g(w)$}\\ 
\{w\}\symd R_w\symd R_u \symd \left(\Symdi{t\in R_u\symd R_w} \odd {G\star u}{g'(t)} \cap \comp{O'} \right) &\text{if $u\in  g(w)$}\end{cases}$$


By induction on $w\in \comp{O'}$, starting at maximal elements in the order, we will now show that $\odd{G\star u}{g'(w)}\cap \comp{O'} =\{w\}$. 
Let $w_0\in \comp{O'}$ be maximal for $\prec$ i.e. $\forall w\in \comp{O'}$, $\neg (w_0\prec w)$. By maximality we must have $g(w_0)\subseteq O'$ and hence $R_{w_0} = \emptyset$. 
If $u\notin g(w_0)$, we then have $\odd{G\star u}{g'(w_0)}\cap \comp{O'} = \{w_0\}$ and we are done. Otherwise, if $u\in g(w_0)$, then $w_0\prec u$, so that $u$ is also maximal and hence $g(u)\subset O$ which also gives $R_u=\emptyset$. As a result $\odd{G\star u}{g'(w)}\cap \comp{O'}$ is also equal to $\{w_0\}$ in this case.

%
%
%
%
%
%

Now let $w\in \comp{O'}$ be arbitary. By induction we can now assume that for any $t\in R_w$ $\odd {G\star u}{g'(t)} \cap \comp{O'} = \{t\}$. Hence:
\begin{itemize}
\item If $u\notin g(w)$, 
\begin{eqnarray*}
\odd {G\star u}{g'(w)}\cap \comp{O'}& =& \{w\}\symd R_w \symd \left(\Symdi{t\in R_w}\odd {G\star u}{g'(t)} \cap \comp{O'} \right) \\
(IH)& = & \{w\}\symd R_w \symd \left( \Symdi{t\in R_w} \{t\} \right) \\
&=& \{w\}\symd R_w\symd R_w\\
&=&\{w\}
\end{eqnarray*} 

\item If $u\in g(w)$, then $w\prec u$ and hence also $\odd {G\star u}{g'(t)} \cap \comp{O'} = \{t\}$ for any $t\in R_u$. We calculate:
\begin{eqnarray*}
\odd {G\star u}{g'(w)}\cap \comp{O'}& =& \{w\}\symd R_w\symd R_u \symd \left(\Symdi{t\in R_u\symd R_w}\odd {G\star u}{g'(t)} \cap \comp{O'} \right) \\
(IH)& = & \{w\}\symd R_w\symd R_u \symd \left( \Symdi{t\in R_u\symd R_w} \{t\} \right) \\
&=& \{w\}\symd R_w\symd R_u\symd R_u\symd R_w\\
&=&\{w\}
\end{eqnarray*} 
\end{itemize}
\end{proof}

\begin{remark}
If $(g,\prec)$ is a focused gFlow for $(G,I,O)$ then $(g',\prec)$ is a gFlow for $(G\star u\setminus u,I,O)$ where $g':\comp O\setminus \{u\} \to 2^{\comp I\setminus \{u\}}$ is  defined as $$g'(w):=\begin{cases}g(w)& \text{if $u\notin  g(w)$}\\ g(w)\symd \{u\}\symd g(u)   &\text{otherwise}\end{cases}$$
Notice that $(g',\prec)$ is not necessarily a \emph{focused} gFlow. 
\end{remark}

\subsubsection{Pivoting}
\label{sec:proofPivot}

We will show in this section that pivoting preserves focused gFlow. For this, it will be useful to introduce some special sets. Let $N_G[u] := N_G(u)\symd \{u\}$ and $\codd G A := \odd  G A \symd A$ denote the \emph{closed neighbourhood} and the \emph{closed odd neighbourhood} respectively. Now, we prove a technical lemma about the action of pivoting on the odd-neighbourhood in terms of these sets.

\begin{lemma}\label{lem:oddneighbourspivot}
Given a graph $G=(V,E)$, $A\subseteq V$, $u \in V$, and $v\in N_G(u)$,
\begin{equation}\label{eq:odd-after-pivot}
\odd {G\wedge uv} A = 
\begin{cases}
  \odd G A & \text{if $u,v\notin \codd G A$}\\
  \odd G {A} \symd N_G[v]&\text{if $u\in \codd G A, v\notin \codd G A$}\\
  \odd G {A} \symd N_G[u]&\text{if $u\notin \codd G A, v\in \codd G A$}\\
  \odd G A \symd N_G[u]\symd N_G[v]& \text{if $u,v\in \codd G A$}
\end{cases}
\end{equation} 
\end{lemma}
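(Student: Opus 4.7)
The plan is to compute $\odd{G \wedge uv}{A}$ directly from the definition $G \wedge uv = G \star u \star v \star u$ by applying Lemma~\ref{lem:oddneighbours} three times in sequence: first at $u$ on $G$, then at $v$ on $G \star u$, and finally at $u$ on $G \star u \star v$. Because $v \in N_G(u)$ by hypothesis and local complementation at a vertex preserves its own neighbourhood, $u$ and $v$ remain adjacent after each step, so each application of Lemma~\ref{lem:oddneighbours} is legitimate.

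Before starting the computation I would record two small facts that streamline the bookkeeping. First, membership of $x \in \{u,v\}$ in $\codd{G}{A}$ is precisely the parity condition that $x$ lies in exactly one of $A$ and $\odd{G}{A}$; this will let me translate the case analysis (which is naturally phrased in terms of $\odd{G\star u}{A}$ etc.) into the statement of the lemma. Second, since $v$ is a neighbour of $u$ in $G$, the neighbourhood of $v$ in $G \star u$ is $N_G(v) \symd N_G(u) \symd \{v\}$, and symmetrically for $u$ in $G \star u \star v$; these explicit formulas, combined with the linearity of $\odd{G}{\cdot}$ noted in the proof of Lemma~\ref{lem:oddneighbours}, give direct expressions for whether $v \in \odd{G\star u}{A}$ (and, after the second step, whether $u \in \odd{G\star u\star v}{A}$) in terms of the original data $A$, $\odd{G}{A}$, $N_G(u)$, $N_G(v)$.

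With those in hand, I would run the three-step computation, splitting on whether $u \in \odd{G}{A}$ at step 1 and on whether $v \in \odd{G\star u}{A}$ at step 2, and simplifying using symmetric-difference algebra, most notably $X \symd X = \emptyset$ and $N_G(u) \symd \{u\} = N_G[u]$. The main obstacle is simply bookkeeping: naively one obtains up to eight nested subcases from the three applications of Lemma~\ref{lem:oddneighbours}, but these must collapse to the four cases stated here, governed by the closed-odd-neighbourhood conditions on $u$ and $v$. The key observation making the collapse transparent is that the parity conditions ``$u \in \odd{G\star u}{A}$'' and ``$v \in \odd{G\star u}{A}$'' depend, after simplification, only on whether $u, v \in \codd{G}{A}$; verifying this correspondence in each branch and then reading off $\odd{G \wedge uv}{A}$ yields the four displayed formulas. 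The $N_G[u]$ and $N_G[v]$ terms in \eqref{eq:odd-after-pivot} arise precisely from the two outer local complementations at $u$, each contributing $N_G(u)$ plus a $\{u\}$ correction from the membership-swap between $A$ and $\odd{G}{A}$.
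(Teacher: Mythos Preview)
Your plan is correct in principle but takes a genuinely different route from the paper. You propose to iterate Lemma~\ref{lem:oddneighbours} three times along the decomposition $G\wedge uv = G\star u\star v\star u$, tracking the membership conditions at each stage and collapsing the resulting (up to eight) subcases to four. The paper instead works directly with the pivoted graph: it records the explicit action of pivoting on individual neighbourhoods,
\[
N_{G\wedge uv}(w) \;=\; N_G(w)\ \symd\ \bigl[\,w\in N_G[u]\,\bigr]\,N_G[v]\ \symd\ \bigl[\,w\in N_G[v]\,\bigr]\,N_G[u],
\]
and then computes $\odd{G\wedge uv}{A}=\Symdi{w\in A}N_{G\wedge uv}(w)$ in one pass by partitioning $A$ according to membership in $N_G[u]$ and $N_G[v]$. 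This gives
\[
\odd{G\wedge uv}{A}=\odd{G}{A}\ \symd\ \Bigl(\Symdi{w\in A\cap N_G[v]}N_G[u]\Bigr)\ \symd\ \Bigl(\Symdi{w\in A\cap N_G[u]}N_G[v]\Bigr),
\]
after which the four cases fall out immediately from the parity $|A\cap N_G[x]|\equiv 1\pmod 2 \iff x\in\codd{G}{A}$.

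The trade-off: your approach reuses Lemma~\ref{lem:oddneighbours} as a black box, which is conceptually tidy, but pays for it with a nested case analysis that you yourself flag as the main obstacle. The paper's approach requires knowing (or deriving) the one-shot neighbourhood formula for pivoting, but then the odd-neighbourhood computation is a single linear calculation with no intermediate case splits. One small inaccuracy in your sketch: the $N_G[v]$ contribution does not come from ``the two outer local complementations at $u$''; in your iterated picture it emerges from the middle step at $v$ (where $N_{G\star u}(v)=N_G(v)\symd N_G(u)\symd\{v\}$), with the two outer steps at $u$ cancelling the unwanted $N_G(u)$ term. This does not affect the validity of your overall plan.
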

\begin{proof}

Note that pivoting acts as follows on neighbourhoods: 
$$\forall w, N_{G\wedge uv} (w) = \begin{cases} 
N_G(w) \symd N_G[u]\symd N_G[v] &\text{if $w\in N_G[u]\cap N_G[v]$}\\
N_G(w) \symd N_G[v] &\text{if $w\in N_G[u]\setminus  N_G[v]$}\\
N_G(w) \symd N_G[u] &\text{if $w\in N_G[v]\setminus  N_G[u]$}\\
 N_G(w) &\text{otherwise} \end{cases}$$
As a consequence, 
\begin{eqnarray*}
  \odd {G\wedge uv} A&=&\Symdi{w\in A} N_{G\wedge uv} (w)\\
  &=& \left(\symdi{w\in A\cap N_G[u]\cap N_G[v]} N_{G\wedge uv} (w) \right) \symd
      \left(\symdi{w\in A\cap N_G[u]^c\cap N_G[v]^c} N_{G\wedge uv} (w) \right) \symd \\
&&    \left(\symdi{w\in A\cap N_G[u]^c\cap N_G[v]} N_{G\wedge uv} (w) \right) \symd
      \left(\symdi{w\in A\cap N_G[u]\cap N_G[v]^c} N_{G\wedge uv} (w) \right) \\
 &=& \left(\symdi{w\in A\cap N_G[u]\cap N_G[v]} N_{G} (w) \symd N_G[u]\symd N_G[v] \right) \symd
    \left(\symdi{w\in A\cap N_G[u]^c\cap N_G[v]^c} N_{G} (w) \right) \symd \\
&&    \left(\Symdi{w\in A\cap N_G[u]^c\cap N_G[v]} N_{G} (w)  \symd N_G[u] \right) \symd
      \left(\Symdi{w\in A\cap N_G[u]\cap N_G[v]^c} N_{G} (w)  \symd N_G[v] \right) \\
  &=& \odd {G} A \symd \left(\Symdi{w\in A\cap N_G[v]} N_G[u]\right) \symd
      \left(\Symdi{w\in A\cap N_G[u]} N_G[v] \right)
\end{eqnarray*}
Notice that $\lvert A\cap N_G[u]\rvert \equiv 1 \bmod 2$ iff $u\in \codd G A$, and similarly for $v$. Hence we obtain \eqref{eq:odd-after-pivot} by case distinction.
\end{proof}

\begin{lemma} \label{lem:gflow-pivot}
If $(g,\prec)$ is a focused gFlow for $(G,I,O)$ then $(g',\prec)$ is a focused gFlow for ${(G\wedge uv \setminus \{u,v\},I,O)}$ where $\forall w \in \comp O \setminus \{u,v\}$, $g'(w):= g(w) \setminus  \{u,v\}$.
\end{lemma}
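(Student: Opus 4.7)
The plan is to check the three focused-gFlow conditions for $(g',\prec)$ on the pivoted-and-deleted graph. The order condition ``$z \in g'(w) \Rightarrow w \prec z$'' and the typing condition $g'(w) \subseteq \comp I \setminus \{u,v\}$ are both inherited immediately from $g$, since $g'(w) \subseteq g(w)$. The only real content is to establish, for every $w \in \comp O \setminus \{u,v\}$,
$$\odd{(G \wedge uv)\setminus\{u,v\}}{g'(w)} \cap (\comp O \setminus \{u,v\}) = \{w\}.$$
Because $g'(w)$ contains neither $u$ nor $v$, deleting these two vertices only strips $u$ and $v$ from the odd neighbourhood, which we discard anyway on the right-hand side. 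So it suffices to show $\odd{G \wedge uv}{g'(w)} \cap \comp{O''} = \{w\}$ where $O'' := O \cup \{u,v\}$.

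To compute $\odd{G \wedge uv}{g'(w)}$, I would apply Lemma \ref{lem:oddneighbourspivot}, which forces a case distinction on whether $u$ and $v$ lie in $\codd G{g'(w)}$. The crucial input is the focused-gFlow property of $g$ at $w$: since $w \neq u,v$ and $u,v \in \comp O$, we have $u,v \notin \odd G{g(w)}$. Combined with the edge $u \sim v$ (so $v \in N_G(u)$ and $u \in N_G(v)$) and with $u,v \notin g'(w)$, this pins down membership in $\codd G{g'(w)}$ in each of the four sub-cases for $g(w) \cap \{u,v\}$. For instance, when $u \in g(w)$ but $v \notin g(w)$, linearity of $\textsf{Odd}$ gives $\odd G{g'(w)} = \odd G{g(w)} \symd N_G(u)$; consequently $u \notin \odd G{g'(w)}$ (no self-loops) while $v \in \odd G{g'(w)}$ (via $v \in N_G(u)$), so $u \notin \codd G{g'(w)}$ and $v \in \codd G{g'(w)}$. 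Lemma \ref{lem:oddneighbourspivot} then yields $\odd{G \wedge uv}{g'(w)} = \odd G{g'(w)} \symd N_G[u] = \odd G{g(w)} \symd \{u\}$, after the two copies of $N_G(u)$ cancel inside $N_G[u]$. The three remaining sub-cases are symmetric and produce $\odd{G \wedge uv}{g'(w)}$ equal to $\odd G{g(w)}$, $\odd G{g(w)} \symd \{v\}$, or $\odd G{g(w)} \symd \{u\} \symd \{v\}$ respectively.

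In every sub-case, intersecting with $\comp{O''}$ annihilates the stray singletons $\{u\}$ and $\{v\}$ and leaves $\odd G{g(w)} \cap \comp{O''} = \odd G{g(w)} \cap \comp O \setminus \{u,v\} = \{w\}$, using the focused-gFlow hypothesis for $g$ and $w \neq u,v$. The main obstacle is the bookkeeping in the mixed cases: one has to track how the $N_G(u)$ and $N_G(v)$ terms introduced by restricting $g(w)$ to $g'(w)$ combine with the $N_G[u], N_G[v]$ terms introduced by pivoting, and verify that $u \sim v$ supplies precisely the parity flips needed for both $u$ and $v$ to end up in $\codd G{g'(w)}$ whenever they are removed. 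Unlike the local-complementation case, the proof requires no recursive definition of $g'$ and no induction on $\prec$, because the restriction formula $g'(w) = g(w) \setminus \{u,v\}$ is explicit.
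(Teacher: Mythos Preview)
Your proposal is correct and follows essentially the same route as the paper: reduce to showing $\odd{G\wedge uv}{g'(w)}\cap\comp{O''}=\{w\}$, then do a four-way case split via Lemma~\ref{lem:oddneighbourspivot} driven by whether $u,v\in g(w)$. The only cosmetic difference is that the paper applies Lemma~\ref{lem:oddneighbourspivot} to $A=g(w)$ (using the clean equivalence $u\in\codd G{g(w)}\iff u\in g(w)$, since $u\notin\odd G{g(w)}$) and then handles the removed singletons by linearity, whereas you apply it to $A=g'(w)$ and first compute $\odd G{g'(w)}$; both computations collapse to $\odd G{g(w)}\symd S$ with $S\subseteq\{u,v\}$ and give the same result.
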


\begin{proof}
Every condition needed for $g'$ to be a focused gFlow is obvious except that $\forall w\in \comp O\backslash \{u,v\}, \odd {G\wedge uv \backslash u\backslash v} {g'(w)} \cap \comp O = \{w\}$, or equivalently, defining $O':=O\cup \{u,v\}$, $\forall w\in \comp{O'}, \odd {G\wedge uv} {g'(w)} \cap \comp{O'} = \{w\}$  

Let $w\in \comp{O'}$. Note that $\codd G {g(w)} = \odd G {g(w)} \symd g(w) = \{w\}\symd g(w) = g(w)\cup \{w\}$, and thus $u\in \codd G {g(w)} \iff u \in g(w)$ and similarly for $v$. Hence, using Lemma~\ref{lem:oddneighbourspivot}:
\begin{itemize}
\item If $u,v\in g(w)$, 
\begin{eqnarray*}
\odd {G\wedge uv}{g'(w)}\cap \comp{O'}&=&\odd {G\wedge uv}{g(w)\symd \{u,v\}}\cap \comp{O'}\\
&=&\odd {G\wedge uv}{g(w)}\cap \comp{O'}\symd \odd {G\wedge uv}{\{u,v\}}\cap \comp{O'}\\
(\textrm{Lemma \ref{lem:oddneighbourspivot}})
&=&\odd {G}{g(w)}\cap \comp{O'}\symd (N_G[u] \symd N_G[v])\cap \comp{O'}\symd \odd {G}{\{u,v\}}\cap \comp{O'}\\
&=&\odd {G}{g(w)}\cap \comp{O'}\symd \odd {G}{\{u,v\}}\cap \comp{O'} \symd \odd {G}{\{u,v\}}\cap \comp{O'}\\
&=&\{w\}
\end{eqnarray*}
\item If $u\in g(w)$ and $v\notin g(w)$,
\begin{eqnarray*}
\odd {G\wedge uv}{g'(w)}\cap \comp{O'}&=&\odd {G\wedge uv}{g(w)\symd \{u\}}\cap \comp{O'}\\
&=&\odd {G\wedge uv}{g(w)}\cap \comp{O'}\symd \odd {G\wedge uv}{\{u\}}\cap \comp{O'}\\
(\textrm{Lemma \ref{lem:oddneighbourspivot}})
&=&\odd {G}{g(w)}\cap \comp{O'} \symd N_G[v]\cap \comp{O'}\symd \odd {G}{\{v\}}\cap \comp{O'}\\
&=&\{w\}
\end{eqnarray*}
Here we have used that $\odd {G\wedge uv} {\{u\}} = N_G(v)$ and that $N_G[v]\cap \comp{O'} = N_G(v)\cap \comp{O'}$.
\item If $u\notin g(w)$ and $v\in g(w)$ we prove it similarly to the previous case.
\item If $u,v\notin g(w)$, 
\begin{eqnarray*}
\odd {G\wedge uv}{g'(w)}\cap \comp{O'}&=&\odd {G\wedge uv}{g(w)}\cap \comp{O'}\\
&=&\odd {G}{g(w)}\cap \comp{O'}\\ 
&=&\{w\}
\end{eqnarray*}
\end{itemize}
\end{proof}

%
%

\subsection{ZX reduction rules}
\label{sec:zx-reduction-rules}

\begin{lemma*}[\ref{lem:lc-simp}]~
  \ctikzfig{lc-simp}
\end{lemma*}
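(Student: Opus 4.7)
My approach would be to reduce this lemma to the graph-state local complementation identity \eqref{eq:gs-local-comp}, which has already been derived in the \zxcalculus (via Ref.~\cite{DP1}). The idea is that \eqref{eq:gs-local-comp} tells us exactly what happens to a graph state under local complementation at a vertex $u$: the underlying graph is locally complemented, an $X_{-\pi/2}$ is placed on $u$'s output, and a $Z_{\pi/2}$ is placed on each neighbour's output. By arranging for the phase $\pm\pi/2$ on the interior spider to play the role of the output, we can then absorb and eliminate $u$ altogether.

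First, I would use \SpiderRule to unfuse the phase $\alpha \in \{\pm\pi/2\}$ from the designated vertex $u$, producing a phaseless green spider $u$ with its original Hadamard edges to its neighbours plus one extra plain edge leading to a new 1-arity green spider $u'$ carrying the phase $\alpha$. Treating the plain edge to $u'$ as a pseudo-output of $u$, I would then apply \eqref{eq:gs-local-comp}: the Hadamard-connected subgraph on $N(u)$ gets locally complemented, an $X_{-\pi/2}$ appears on the pseudo-output wire between $u$ and $u'$, and each neighbour $v \in N(u)$ acquires a $Z_{\pi/2}$ which, by \SpiderRule, fuses into the existing phase on $v$.

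What remains is to simplify the dangling chain consisting of $u$ (phaseless, green) joined by a plain edge to $X_{-\pi/2}$ (red, phase $-\pi/2$) joined to $u'$ (green, phase $\alpha$). A short direct calculation using \PiRule, \CopyRule, \HadamardRule and \SpiderRule shows that a red $-\pi/2$ composed with an adjacent green $\pm\pi/2$ effect reduces to a computational basis effect (up to a nonzero scalar), which in turn acts on the phaseless $u$ as the identity. Dropping the scalar factor per the convention of Remark~2.1 removes $u$ and its incident edges entirely, leaving exactly the locally complemented graph on $N(u)$ with the phase updates predicted by \eqref{eq:lc-simp}.

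The main obstacle will be bookkeeping of signs and colours, since the correct phase update on the neighbours is $-\alpha$, and this must be verified for both $\alpha = +\pi/2$ and $\alpha = -\pi/2$. The two cases are related by a colour-swap via \HadamardRule/\HCancelRule combined with \PiRule, but one has to check that the $\pm\pi/2$ shift coming from \eqref{eq:gs-local-comp} after collapsing the pseudo-output really does combine with the neighbours' prior phases to yield the RHS of \eqref{eq:lc-simp}. Beyond that, everything reduces to a finite, mechanical application of the rules in Figure~\ref{fig:zx-rules}, so no further subtlety is expected.
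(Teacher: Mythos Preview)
Your strategy matches the paper's: unfuse the $\pm\pi/2$ phase from $u$, apply the graph-state local complementation identity~\eqref{eq:gs-local-comp}, then collapse the dangling chain to remove $u$. The one point that needs fixing is your description of the last step: a computational basis effect plugged into one leg of the phaseless green spider $u$ does \emph{not} ``act as the identity'' (and dropping a scalar is not what removes $u$); rather, by \CopyRule the red $a\pi$ state copies through $u$ onto every remaining (Hadamard) leg, passes through the Hadamards as a green $a\pi$ state, and fuses into each neighbour---this is what actually deletes $u$, and when $a=1$ it supplies the extra $\pi$ that makes the neighbour update come out to $-\alpha$. The paper streamlines this by first using an auxiliary $S$-state identity to convert the green $\pm\pi/2$ state into a red $\mp\pi/2$ state and choosing the sign in~\eqref{eq:gs-local-comp} so that the resulting $X_{\pm\pi/2}$ cancels it to a \emph{phaseless} $X$-spider; the copy-through step then never carries an extra $\pi$, avoiding the case split you anticipate.
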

\begin{proof}
First of all, we will need the following equation:
  \begin{equation}\label{eq:s-state-eq}
  \hfill\tikzfig{S-state-equality}\hfill
  \end{equation}

  We pull out all of the phases via \SpiderRule then apply the local complementation rule \eqref{eq:gs-local-comp}:
  \ctikzfig{lc-simp-proof}
  Thanks to equation \eqref{eq:s-state-eq}, the topmost spider in the RHS above becomes an X-spider, with phase $\mp \pi/2$, which is cancelled out by $X_{\pm \pi/2}$ gate directly below it. The resulting (phaseless) X-spider copies and fuses with the neighbours:
  \ctikzfig{lc-simp-proof-2}
\end{proof}

\begin{lemma*}[\ref{lem:pivot-simp}]~
  \ctikzfig{pivot-simp}
\end{lemma*}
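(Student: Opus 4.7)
My plan is to imitate the proof of Lemma~\ref{lem:lc-simp}, but using the graph-state pivot identity \eqref{eq:gs-pivot} in place of the local-complementation identity \eqref{eq:gs-local-comp}. Call the two Pauli phases $a\pi$ and $b\pi$. First, I would apply \SpiderRule to unfuse each Pauli phase from its spider onto an adjacent, one-legged spider connected by an ordinary wire, so that $u$ and $v$ become phaseless and the local subdiagram around them is exactly a graph-state fragment with two extra dangling legs carrying the Pauli phases.

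Second, I would apply the identity \eqref{eq:gs-pivot} to this phaseless fragment. This performs three changes at once: the connectivity among $N_G(u)\cap N_G(v)$, $N_G(u)\setminus N_G(v)$ and $N_G(v)\setminus N_G(u)$ is toggled to give $G\wedge uv$; a $Z_\pi$ spider is added on every common neighbour; and a Hadamard box is placed on each of the two dangling legs, with the legs then swapped. I would then absorb those two Hadamard boxes into the unfused Pauli spiders via \HadamardRule, converting them into X-spiders still carrying the phases $a\pi$ and $b\pi$, each attached by an ordinary edge to a single Z-spider of the pivoted graph-state part.

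Third, I would use the $\pi$-copy rule \PiCopyRule of Lemma~\ref{lem:pi-state-copy} to copy each Pauli X-spider through its neighbouring Z-spider and out onto all of the Z-spider's Hadamard-edge neighbours, leaving isolated scalar X-spiders that can be dropped. Because of the swap introduced by \eqref{eq:gs-pivot}, the $a\pi$ X-spider ends up attached to (and so copies onto) the vertices that the spider formerly labelled $v$ is connected to in the pivoted layout, i.e.\ the original exclusive $v$-neighbours together with the common neighbours; symmetrically the $b\pi$ X-spider copies onto the exclusive $u$-neighbours and the common neighbours. Combining these phases with the $\pi$ already placed on each common neighbour by \eqref{eq:gs-pivot} gives a net phase shift of $(a+b+1)\pi$ on each common neighbour, $a\pi$ on each exclusive $v$-neighbour, and $b\pi$ on each exclusive $u$-neighbour, matching the RHS of \eqref{eq:pivot-simp}. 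The spiders originally labelled $u$ and $v$ are now disconnected from everything except those scalar X-spiders, so they drop out up to a global scalar, completing the proof.

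The main obstacle is bookkeeping. The swap on the RHS of \eqref{eq:gs-pivot} redirects each of the two Pauli phases to the ``opposite'' group of exclusive neighbours, and one has to check carefully that after absorbing the two Hadamard boxes into the Pauli spiders and then unfolding the \PiCopyRule derivation (which itself rests on \SpiderRule, \PiRule and \CopyRule), the Hadamard edges of $G\wedge uv$ are preserved to every neighbour group. Once this is tracked correctly, the remainder is a routine chain of fusions, Hadamard cancellations via \HHRule, and $\pi$-copies, entirely parallel to the proof of Lemma~\ref{lem:lc-simp}.
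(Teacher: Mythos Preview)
Your proposal is correct and follows essentially the same route as the paper: unfuse the Pauli phases via \SpiderRule, apply the graph-state pivot identity~\eqref{eq:gs-pivot}, absorb the resulting Hadamards into the dangling Pauli $Z$-spiders using \HadamardRule to make them $X$-spiders, and then let those $X$-spiders copy through (via \CopyRule or \PiCopyRule, depending on whether the phase is $0$ or $\pi$) and fuse into the neighbouring groups. The paper's own proof is terser about the swap/phase bookkeeping you spell out, but the strategy is identical.
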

\begin{proof}
   We pull out all of the phases via \SpiderRule then apply the pivoting rule \eqref{eq:gs-pivot}:
   \ctikzfig{pivot-simp-proof}
   We then apply the colour-change rule to turn the Z-spiders with phases $j\pi$ and $k\pi$ into X-spiders, which copy and fuse with the neighbours of the original two vertices:
   \ctikzfig{pivot-simp-proof-2}
\end{proof}

\begin{proposition*}[\ref{prop:cnotgflow}]
  The following equation holds.
  \begin{equation}
  \tikzfig{cnot-pivot} 
  \end{equation}
  Here $M$ describes the biadjacency matrix of the relevant vertices, and $M^\prime$ is the matrix produced by starting with $M$ and then adding row 2 to row 1, taking sums modulo 2. Furthermore, if the diagram on the LHS has a focused gFlow, then so does the RHS.
\end{proposition*}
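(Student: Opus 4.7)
The approach is to expand the CNOT into its constituent spiders, push them into the frontier structure, and recognise the resulting sub-diagram as one to which the pivot-simplification Lemma~\ref{lem:pivot-simp} applies. The desired row operation on the biadjacency matrix will then fall out of the edge-toggling performed by the pivot, while focused-gFlow preservation will follow from Theorem~\ref{thm:gflow-preserve} together with the input/output-extension invariance proved in~\cite{mhalla2011graph}.

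Concretely, I would first expand the CNOT as a phaseless green Z-spider on the control wire connected by a plain edge to a phaseless red X-spider $r$ on the target wire, and then fuse the green spider into the control frontier spider $w_1$ using \SpiderRule, so that $w_1$ gains a single plain edge to $r$. Next, I would apply the colour-change rule \HadamardRule to $r$, turning it into a phaseless Z-spider with Hadamards absorbed onto each of its three legs; the edges to $w_1$ and to the target frontier spider $w_2$ now become Hadamard edges, and the remaining outgoing wire picks up a single Hadamard box. To restore graph-like form on the output side, I would insert a fresh degree-$2$ phaseless Z-spider $w_2'$ on that wire using \IdentityRule and \HCancelRule; this $w_2'$ becomes the new target-side frontier vertex.

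At this point, $w_2$ and $r$ are two adjacent interior phaseless (Pauli) Z-spiders whose common neighbourhood is empty (since $w_1, w_2' \notin N(w_2)$ in the generic case) and whose exclusive neighbourhoods are $N(w_2)$ and $\{w_1, w_2'\}$ respectively. Applying Lemma~\ref{lem:pivot-simp} to the pair $(w_2, r)$ deletes both spiders and toggles the Hadamard edges between these two exclusive sets, i.e.\ XOR-adds the row of $M$ corresponding to one frontier into the row corresponding to the other. After renaming $w_2'$ back to $w_2$, this is precisely the transformation $M \mapsto M'$ claimed in the statement (the exact row-to-row direction being a bookkeeping matter depending on the convention for indexing control vs.\ target).

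For the focused-gFlow claim I would compose three preservation results: spider fusion and colour-change do not alter the underlying open graph beyond local relabelling of vertices and edge decorations; insertion of the degree-$2$ spider $w_2'$ on the output is an input/output extension and therefore preserves focused gFlow by~\cite{mhalla2011graph}; and the final pivot-and-delete on the interior pair $(w_2, r)$ is exactly case~(2) of Theorem~\ref{thm:gflow-preserve}. The main bookkeeping obstacle I anticipate is verifying that the auxiliary spider is inserted cleanly and that no interior edge outside the frontier is spuriously toggled by the pivot; this reduces to checking that the common neighbourhood of $w_2$ and $r$ is indeed empty, which is immediate from the fact that $w_1$ and $w_2'$ lie on the frontier rather than in the past of $w_2$.
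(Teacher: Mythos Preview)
Your proposal is correct and follows essentially the same route as the paper: absorb the CNOT into the graph-like diagram by introducing new spiders at the outputs, then apply the pivot rule of Lemma~\ref{lem:pivot-simp} to the resulting pair of interior Pauli spiders to obtain the row operation on $M$, with gFlow preservation coming from output-extension plus Theorem~\ref{thm:gflow-preserve}. The only imprecision is your claim that fusion and colour-change ``do not alter the underlying open graph beyond local relabelling''---in fact the absorption step adds the vertex $r$ and the edge $w_1 r$ as well, so the gFlow argument really decomposes as two output extensions together with adding an edge between two output vertices (which trivially preserves focused gFlow); the paper is equally terse here, simply asserting that the extension of gFlow to the new output vertices is ``clear''.
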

\begin{proof}
  We will show how to transform the first diagram into the second in such a way that gFlow and equality is preserved at every step. For clarity we will not draw the entire diagram, but instead we focus on the relevant part. First of all we note that we can add CNOTs in the following way while preserving equality:
  \ctikzfig{cnot-pivot3}
  As we are only adding vertices at the outputs, it should be clear how the gFlow can be extended to incorporate these new vertices.

  Now let $A$ denote the set of vertices connected to the top vertex, but not to the vertex beneath it, $B$ the set of vertices connected to both, and $C$ the vertices connected only to the bottom one. Further restricting our view of the diagram to just these two lines, we see that we can apply a pivot rewrite as in \eqref{eq:pivot-simp}:
  \ctikzfig{cnot-pivot4}
  By Corollary~\ref{cor:simp-preserves-gflow} this rewrite preserves focused gFlow. Looking at the connectivity matrix, it is straightforward to see that the matrix $M$ has now been changed in exactly the way described.
\end{proof}

\subsection{Reduction of Local Cliffords}
\label{sec:reduction-lc}

This section will rely on the local complementation rule:
\begin{equation}\label{eq:gs-local-comp-app}
\tikzfig{local-comp-ex}
\end{equation}
to reduce the local Cliffords in a GS-LC form to a fixed set.

Let $\widetilde S := HSH$, i.e. an X-phase gate with phase $\pi/2$. Then, \eqref{eq:gs-local-comp-app} introduces a $\widetilde S^\dagger$ on a single output while introducing a $S$ on each of its neighbours.

\begin{theorem}
From the GS-LC form:
\begin{equation}\label{eq:gslc-app}
\tikzfig{gslc}
\end{equation}
it is possible to apply local complementation rule~\eqref{eq:gs-local-comp} until all of the local Cliffords on the inputs are in the set $\{S^n, H, ZH\}$ and the outputs to the set $\{S^n, H, HZ \}$.
\end{theorem}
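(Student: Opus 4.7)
The plan is to iteratively apply local complementation \eqref{eq:gs-local-comp-app} (and pivoting, as needed) to the GS-LC diagram \eqref{eq:gslc-app}, processing the output wires one at a time. I would first catalogue the action on the tuple of output local Cliffords: a single local complementation at vertex $v$ right-multiplies $U_v$ by $\widetilde{S}^\dagger = X_{-\pi/2}$ and right-multiplies each neighbouring $U_w$ by $S = Z_{\pi/2}$. Applying local complementation at $v$ between zero and three times therefore right-multiplies $U_v$ by any chosen element of the cyclic group $\langle \widetilde{S}^\dagger \rangle = \{I, \widetilde{S}^\dagger, X, \widetilde{S}\}$, at the cost of a corresponding $S$-power on every neighbour. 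Pivots along edges at $v$ additionally insert $H$ factors at both endpoints and $Z$ factors on common neighbours.

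Next I would verify the key algebraic claim: $\{S^n, H, HZ\}$ is a transversal for the right cosets of $\langle \widetilde{S} \rangle$ in the single-qubit Clifford group. Since $|\mathrm{Cliff}_1|/|\langle \widetilde{S} \rangle| = 24/4 = 6$ and the target set has exactly six elements, this follows from a direct check that these six Cliffords lie in distinct cosets. Consequently, for any initial value of $U_v$ there is a unique exponent $a \in \{0,1,2,3\}$ such that $U_v \cdot \widetilde{S}^a$ lies in the target set, so $a$ iterations of local complementation at $v$ normalise $U_v$. The input side is handled symmetrically, with $\{S^n, H, ZH\}$ acting as a left coset transversal.

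The main obstacle is controlling the side effects on neighbours. When $v$ is normalised, each $U_w$ with $w \in N(v)$ is right-multiplied by some $S^a$. The target set is closed under right multiplication by $S^2 = Z$, so even powers are harmless, but odd powers take $H$ and $HZ$ out of the target set. I would therefore process vertices in a carefully chosen order---for instance, picking at each step a vertex whose normalisation requires an even number of local complementations, or more systematically following Backens's inductive scheme from \cite[Thm.~13]{Backens1}---and use pivots to reconcile remaining mismatches. Termination will follow from a monovariant such as a weighted count of outputs still outside the target set, with weights chosen so that each sweep strictly decreases the total. Making this monovariant work uniformly, so that the coupling between adjacent vertices does not cause the procedure to oscillate, is the technical heart of the argument.
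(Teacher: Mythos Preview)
Your coset observation is correct and illuminating: the six elements $\{I, S, Z, S^\dagger, H, HZ\}$ do form a transversal for the cosets $U\langle\widetilde S\rangle$ (minor quibble: these are \emph{left} cosets, since local complementation right-multiplies $U_v$ by a power of $\widetilde S$), so any single local Clifford can be driven into the target set by an appropriate number of local complementations at that vertex. But you have correctly identified the real obstacle---the side effects on neighbours---and then left it unresolved. Saying that termination ``will follow from a monovariant'' with weights ``chosen so that each sweep strictly decreases the total'' is not a proof; it is a description of what a proof would have to look like. The coupling is genuine: the target set is not closed under right-multiplication by odd powers of $S$, your ``pick a vertex needing an even number of local complementations'' heuristic can simply run out of candidates, and pivoting right-multiplies by $H$, which also fails to preserve the target set. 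You cannot assume a suitable monovariant exists without exhibiting it, and you explicitly concede that this is ``the technical heart of the argument''.

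The paper sidesteps this difficulty by a two-stage strategy. First, it invokes Backens's Theorem~13 as a black box: this brings all local Cliffords into the intermediate set $\{S^n, \widetilde S S, \widetilde S S^\dagger\}$ \emph{together with} the structural guarantee that no two vertices carrying an $\widetilde S$ factor are adjacent. Second, each vertex with local Clifford $\widetilde S S$ or $\widetilde S S^\dagger$ is converted to $H$ or $HZ$ by a fixed number of local complementations at that vertex (using $\widetilde S S \widetilde S \propto H$ and $\widetilde S S^\dagger \widetilde S^\dagger \propto HZ$). The non-adjacency condition is exactly what makes this second pass non-interfering: every neighbour of an $\widetilde S$-vertex already has local Clifford $S^n$, a set closed under right-multiplication by $S$-powers, and local complementation at such a vertex creates no new adjacencies among the remaining $\widetilde S$-vertices. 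No monovariant is needed; the termination of the hard part is delegated to Backens, and the final sweep is provably feedback-free. Inputs are then handled by transposing the state result, giving $\{S^n, H, ZH\}$. Your proposal, by contrast, tries to reach the final target set in one shot and must therefore confront the interference problem directly, which you have not done.
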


\begin{proof}
In Ref.~\cite[Thm.~13]{Backens1}, Backens showed that, from a state (i.e. a diagram with only outputs) in GS-LC form, the rule~\eqref{eq:gs-local-comp-app} could be applied until all local Cliffords are in the set $\{ S^n, \widetilde S S, \widetilde S S^\dagger \}$ and no two qubits with an $\widetilde S$ gate are adjacent. If an output has $\widetilde S S$ as its local Clifford, we can apply \eqref{eq:gs-local-comp-app} from right to left 3 times to transform it into $\widetilde S S \widetilde S \propto H$. As the vertex is only connected to vertices that have a $S^n$ gate, this doesn't change the type of Clifford those neighbours have, and it also doesn't create new connections between vertices that both have a $\widetilde S$ gate.

Similarly, if the output has local Clifford $\widetilde S S^\dagger$, we can apply \eqref{eq:gs-local-comp-app} 1 time to transform it into $\widetilde S S^\dagger \widetilde S^\dagger \propto HZ$. Hence, the local Cliffords on all outputs for a GS-LC state can be taken from the set $\{ S^n, H, HZ \}$.

To go from GS-LC states considered by Backens to GS-LC maps of the form \eqref{eq:gslc-app}, we simply change some of the outputs into inputs via (partial) transpose. Hence input local Cliffords can be taken from the set $\{ (S^n)^T, H^T, (HZ)^T \} = \{ S^n, H, ZH \}$.
\end{proof}

\newpage

\section{Pseudo-code for extraction algorithm}
\label{sec:algorithms}

\makeatletter
\algrenewcommand\ALG@beginalgorithmic{\small}
\makeatother

\renewcommand\algorithmiccomment[1]{\hfill{\color{gray!80!green}$\triangleright$\,\textit{#1}}}

\begin{algorithm}[!htb]
\caption{Extraction algorithm}\label{alg:extraction}
\begin{algorithmic}[1]
\Procedure{Extract}{$D$}\Comment{input is graph-like diagram $D$}
  \State Init empty circuit $C$
  \State $G,I,O\gets $ Graph$(D)$\Comment{get the underlying graph of $D$}
  \State $F \gets O$ \Comment{initialise the frontier}
  \For{$v\in F$}
    \If{$v$ connected to output by Hadamard}
      \State $C\gets $ Hadamard(Qubit($v$))
      \State Remove Hadamard from $v$ to output
    \EndIf
    \If{$v$ has non-zero phase}
      \State $C\gets $ Phase-gate$(Phase(v),Qubit($v$))$
    \EndIf
  \EndFor
  \For{edge between $v$ and $w$ in $F$}
    \State $C\gets $ CZ(Qubit($v$), Qubit($w$))
    \State Remove edge between $v$ and $w$
  \EndFor

  \While{$\exists v\in D\backslash F$}\Comment{there are still vertices to be processed}
    \State $D,F,C \gets $ UpdateFrontier$(D,F,C)$
  \EndWhile
  \For{$v\in F$} \Comment{the only vertices still in $D$ are in $F$}
    \If{$v$ connected to input via Hadamard}
      \State $C\gets$ Hadamard(Qubit($v$))
    \EndIf
  \EndFor
  \State Perm $\gets$ Permutation of Qubits of $F$ to qubits of inputs
  \For{swap$(q_1,q_2)$ in PermutationAsSwaps(Perm)}
    \State $C\gets$ Swap$(q_1,q_2)$
  \EndFor
  \State \textbf{return} $C$
\EndProcedure

\Procedure{UpdateFrontier}{$D,F,C$}
  \State $N\gets $ Neighbours$(F)$
  \State $M \gets$ Biadjacency$(F,N)$ 
  \State $M^\prime \gets $ GaussReduce$(M)$ 
  \State Init $ws$ \Comment{initialise empty set $ws$}
  \For{row $r$ in $M^\prime$}
    \If{sum$(r) == 1$}\Comment{there is a single 1 on row $r$}
      \State Set $w$ to vertex corresponding to nonzero column of $r$
      \State Add $w$ to $ws$ \Comment{$w$ will be part of the new frontier}
    \EndIf
  \EndFor
  \State $M \gets $  Biadjacency($F,ws$) \Comment{smaller biadjacency matrix}
  \For{$(r_1,r_2) \in $ GaussRowOperations$(M)$}
    \State $C\gets $ CNOT(Qubit$(r_1)$, Qubit$(r_2)$)
    \State Update $D$ based on row operation
  \EndFor
  \For{$w\in ws$} \Comment{all $w$ now have a unique neighbour in $F$}
    \State $v \gets $ Unique neighbour of $w$ in $F$
    \State $C\gets $ Hadamard(Qubit$(v)$)
    \State $C\gets $ Phase-gate(Phase$(w)$,Qubit$(v)$)
    \State Remove phase from $w$
    \State Remove $v$ from $F$
    \State Add $w$ to $F$
  \EndFor
  \For{edge between $w_1$ and $w_2$ of $ws$}
    \State $C\gets $ CZ(Qubit($w_1$),Qubit$(w_2)$)
    \State Remove edge between $w_1$ and $w_2$
  \EndFor
  \State \textbf{return} $D,F,C$
\EndProcedure
\end{algorithmic}
\end{algorithm}

\end{document}